\author{Victor Boone$^*$ \and Georgios Piliouras$^\dagger$}
\institute{\hspace{-32pt}$^*$ENS de Lyon  $\hspace{48pt}^\dagger$SUTD}
\title{From Darwin to Poincar\'e and von Neumann: Recurrence and Cycles in Evolutionary and Algorithmic Game Theory}
\newcommand\supp{{\textnormal{supp}}}
\newcommand\interior{{\textnormal{int}}}
\newcommand\divergence{{\textnormal{div}}}
\newcommand\dist{{\textnormal{dist}}}
\newcommand\kl{{\textnormal{KL}}}
\newcommand\R\bbbr
\newcommand\N\bbbn
\newcommand\f{{\boldsymbol f}}
\newcommand\vol{{\textnormal{vol}}}
\newcommand\varnothing{{\textnormal{\footnotesize O}\hspace{-6.75pt}\boldsymbol/}}
\begin{document}
	\maketitle

  \begin{abstract}
Replicator dynamics, the continuous-time analogue of Multiplicative Weights Updates, is the main dynamic in evolutionary game theory. 
In simple evolutionary zero-sum games, such as Rock-Paper-Scissors, replicator dynamic is  periodic \cite{zeeman1980population}, however, its behavior in higher dimensions is not well understood.
We provide a complete characterization of its behavior  in zero-sum evolutionary games. We prove that, if and only if, the system has an interior Nash equilibrium, the dynamics exhibit Poincar\'{e} recurrence, i.e., almost all orbits come arbitrary close to their initial conditions infinitely often. If no interior equilibria exist, then all interior initial conditions converge to the boundary. Specifically, the strategies that are not in the support of any equilibrium vanish in the limit of all orbits. All recurrence results furthermore extend to a class of games that generalize both graphical polymatrix games as well as evolutionary games, establishing a unifying link between evolutionary and algorithmic game theory. We show that  two degrees of freedom, as in Rock-Paper-Scissors, is sufficient  to prove periodicity.
  \end{abstract}

\vspace{-3em}

\begin{figure}[H]
  \centering
  \includegraphics[width=.8\linewidth]{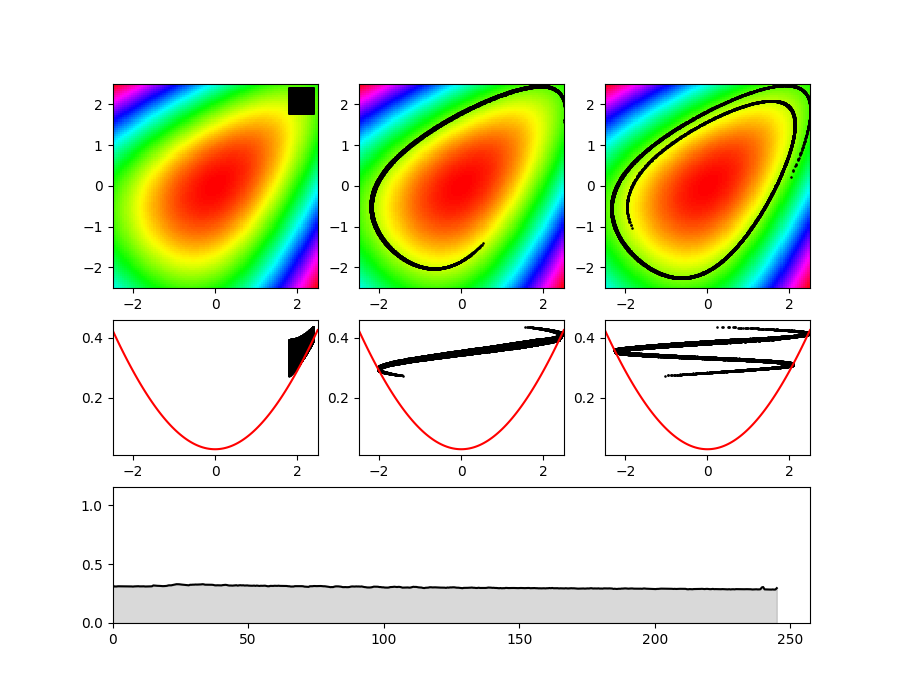}
  \caption{One agent Rock-Paper-Scissor,
  see \ref{front-diagram}. Animation \href{https://www.dropbox.com/s/c37cgiztlcryje6/foo.avi?dl=0}{here}}
\end{figure}

\section{Introduction}


Replicator dynamics is a basic model of evolution that is amongst the most well studied game theoretical models of adaptive behavior \cite{Taylor1978145,Schuster1983533}. It is the standard dynamic in evolutionary game theory \cite{Hofbauer96,Wei95} and enjoys formal connections to other classic evolutionary models such as the Price equation, the Lotka-Volterra equation of ecology and the quasispecies equation of molecular evolution \cite{page2002unifying,bomze1995lotka}. Replicator also has strong, inherent connections to computer science and optimization theory. It is the continuous time-analogue of Multiplicative Weights Update \cite{Kleinberg09multiplicativeupdates}, arguably the most widely studied online learning and optimization algorithm and a meta-algorithmic  technique in itself with numerous applications \cite{Arora05themultiplicative}.
Furthermore, it has diverse microcanonical foundations \cite{Sandholm10}, i.e., it can emerge from numerous, simple (memoryless, best-response like) population dynamics, which enhance its plausibility as a model of emergent behavior.
Finally, it has an interpretation as an inference dynamic
 \cite{harper2011escort,karev2010replicator}. Specifically, 
 for systems governed by the replicator equations the maximum entropy principle (MaxEnt) 
 can be derived  rather than postulated as, e.g., in thermodynamics or statistical mechanics \cite{jaynes1957information}. Given this impressive web of connections, it would not be unreasonable to think of replicator as a near-universal model of adaptive behavior, a proto-intelligence mechanism, emerging from simple physical processes and giving rise to self-organizing, ever-more complex and efficient systems. 
  As such understanding its behavior in different contexts  can simultaneously shed light to many of its related adaptive processes.

Evolution as it turns out is a very efficient force of systemic optimization.
Replicator dynamics is a regret minimizing dynamic in arbitrary games. Its regret converges to zero at a rate of $O(1/T)$ \cite{mertikopoulos2017cycles,Kwon}. Specifically, its total regret remains bounded for all time. In cases of games where agents' interests are strongly aligned, such as potential, i.e., congestion games, replicator dynamics is known to perform admirably well. Not only does it converge to Nash equilibria \cite{sandholm2008projection} but typically to pure Nash \cite{Kleinberg09multiplicativeupdates}. Furthermore, it has been shown that pure Nash equilibria of higher social welfare have larger regions of attraction and hence an average case analysis of replicator dynamics where the initial condition is drawn uniformly at random can lead to an expected social welfare that can be much higher than those predicted by Price of Anarchy analysis \cite{panageas2016average}. Finally, even in games where the dynamics are non-equilibrating  replicator dynamics may converge to limit cycles with optimal social welfare that dominate the performance of even the best Nash equilibrium by an arbitrary amount \cite{paperics11,gaunersdorfer1995fictitious}. That is, replicator dynamic can significantly outperform even Price of Stability type of guarantees. 

When we move to zero-sum games (and variants thereof) Price of Anarchy and more generally social welfare optimization type of results are no longer applicable. One would hope that in such games the Nash equilibria would be accurate predictors of the system behavior. If so equilibration would have not only a strong economic and algorithmic justification due to the celebrated maxmin theorem by von Neumann \cite{Neumann1944} and its connection to linear programming but also an evolutionary one. Unfortunately, this is not the case. \cite{Sato02042002} established experimentally that even small zero-sum games may have complex, non-equilibrating, chaotic type of behavior. More recently, \cite{piliouras2014optimization} established that despite their chaotic behavior, these dynamics have also exploitable structure. Specifically, replicator dynamics in two-player zero-sum games with interior Nash equilibria are Poincar\'{e} recurrent. This means that almost all initial conditions return infinitely often arbitrarily close to their initial conditions. This result holds even for networks of zero-sum games, however, this class of games fails to capture the standard class of evolutionary zero-sum games. The immediate distinction between evolutionary games and standard multi-agent games is that evolutionary games only admit a single distribution over a simplex of strategies.
 These are games where a large population of animals  
  compete against each other and where the frequencies of the different genotypes/strategies evolve according to the replicator dynamics. From the perspective of standard two-agent zero-sum games, the question reduces to analyzing antisymmetric zero-sum games (i.e. Rock-Paper-Scissors) under symmetric initial conditions. Due to the (anti)-symmetric nature of the game, the symmetry of initial condition is preserved by the dynamic. Thus, the dynamic evolves on a lower-dimensional manifold, which is a zero-measure set, hence the Poincar\'{e} recurrence result of  \cite{piliouras2014optimization} does not suffice to understand the behavior for such non-generic initial conditions. Our goal here is to completely understand the behavior of replicator dynamics in such settings and furthermore develop an expansive unifying framework for understanding dynamics both  in evolutionary games as well as  two-agent and multi-agent settings as well.

{\bf Our results.} We provide a complete characterization of the behavior of replicator dynamic in zero-sum evolutionary games. We prove that if and only if, the system has an interior Nash equilibrium, the dynamics exhibit Poincar\'{e} recurrence. If no interior equilibria exist, then all interior initial conditions converge to the boundary (Theorem \ref{collapses}). Specifically, the strategies that are not in the support of any equilibrium vanish in the limit of all orbits. All recurrence results furthermore extend to a class of games that generalize both graphical polymatrix games as well as evolutionary games (Theorem \ref{general-thm}). Specifically, we allow for polymatrix edges with self-edges, where all polymatrix games are constant-sum, and all self-edges are antisymmetric games.  To prove these results, we provide the most general to date set of game theoretic conditions under which replicator dynamics can be shown to be volume preserving (under a diffeomorphism, i.e. a differentiable transformation with invertible inverse) (Theorem \ref{coro-vol}). The other stepping stone in the direction of proving recurrence/convergence to the boundary is showing that the KL-divergence between the Nash equilibrium and the state of the system is invariant/strictly decreasing if the zero-sum games has/(does not have) an interior Nash. This argument mirrors arguments for the case of multiple agent replicator dynamics \cite{piliouras2014optimization,mertikopoulos2017cycles}
 Finally, we show that in this class of games, two degrees of freedom, as in Rock-Paper-Scissors, is sufficient  to prove periodicity (Theorem \ref{periodic}). Furthermore, as we argue this does not follow from an immediate combination of Poincar\'{e} recurrence and Poincar\'{e}-Bendixson theorems but requires more specialized arguments. The full version of this paper can be found online \cite{Boone19}.



\section{Related work}

{\bf Non-equilibration, recurrence and volume preservation.} 
 In evolutionary game theory, numerous non-convergence results are known but they are usually restricted to small games \cite{Sandholm10}. 
 \cite{Akin84} was the first paper to study both discrete and continuous-time evolutionary dynamics in zero-sum games and establish invariant for the dynamics, however, no formal recurrence or periodicity was shown. Constants of the motion exist for different classes of games (e.g. coordination/partnership games, null stable games) and dynamics \cite{HS98,Sandholm10,panageas2016average} even for games with convergent dynamics. An orthogonal property of game dynamics is the preservation  of volume of initial conditions (up to state space/speed transformation, see \cite{HS98,Sandholm10,mertikopoulos2017cycles}). 
  \cite{piliouras2014optimization} and \cite{PiliourasAAMAS2014} showed that replicator dynamics in (network) zero-sum games (and affine variants thereof) exhibit a specific type of repetitive behavior, known as Poincar\'{e} recurrence by combining these two type of arguments.  
  Recently, \cite{mertikopoulos2017cycles} proved that Poincar\'{e} recurrence also shows up in a more general class of continuous-time dynamics known as Follow-the-Regularized-Leader (FTRL).  \cite{2017arXiv171011249M} established that the recurrence results for replicator dynamics extend to some biologically-inspired dynamically evolving zero-sum games.
  Perfectly periodic (i.e., cyclic) behavior  for replicator may arise in team competition \cite{DBLP:journals/corr/abs-1711-06879} as well as in network competition \cite{nagarajan2018three}. 
 Our techniques build and extend upon these results by producing necessary, as well as sufficient conditions, for volume preservation, recurrence as well as periodicity.

{\bf Game dynamics as physics.} 
Recently, \cite{BaileyAAMAS19} established a connection between game theory, online optimization in continuous-time (FTRL dynamics) and a ubiquitous class of systems in physics known as Hamiltonian dynamics, which exhibit conservation laws (``conservation of energy").  
 In the case of discrete-time dynamics such as MWU or gradient descent the system trajectories are first order approximations of the continuous-time dynamics. Energy conservation and recurrence no longer hold. Instead  energy increases and the dynamics divergence to the boundary \cite{BaileyEC18}. The dynamics exhibit volume expansion and Lyapunov chaos \cite{CP2019}. Despite this divergent, chaotic behavior, gradient descent with fixed step size, has vanishing regret in small zero-sum games \cite{2019arXiv190504532B}.  More elaborate discretization techniques, based on leap-frogging (Verlet) symplectic integration technique for Hamiltonian dynamics, result in discrete-time algorithms of bounded regret in general games and Poincar\'{e} recurrence in zero-sum games respectively \cite{2019arXiv190704392B}.
  So far, it is not clear to what extent the connections with Hamiltonian dynamics can be generalized; however, \cite{ostrovski2011piecewise} have considered a class of piecewise affine Hamiltonian vector fields whose orbits are piecewise straight lines and developed the connections with best-reply dynamics.  


{\bf Game dynamics as dynamical systems.}
Finally, \cite{Entropy18,papadimitriou2019game} initiated a program for linking game theory to topology, specifically to Conley's fundamental theorem of dynamical systems \cite{conley1978isolated}. 
 This approach shifts attention from Nash equilibria to a more general notion of recurrence, called chain recurrence, that generalizes both periodicity and Poincar\'{e} recurrence. 
\cite{omidshafiei2019alpha} embeds this approach within an algorithmically tractable framework and uses it to develop new training algorithms for multi-agent AI settings.

\section{Preliminaries and definitions}

\subsection{Zero-sum games and Zero-sum polymatrix games}

A \emph{graphical polymatrix game} is defined using a directed graph $G = 
(V, E)$ where $V$ corresponds to the set of agents (or players) 
and where every edge corresponds to a \emph{bimatrix game} between its two 
endpoints/agents. Each agent $i \in V$ has a set of \emph{actions} $\mathcal A_i 
=\{1 \ldots n_i\}$ that he is allowed to select randomly under a distribution
$x_i$ called a \emph{mixed stragegy}. The set of mixed strategies of player 
$i$ is written $\mathcal X_i = \Delta \R^{n_i} = \{x_\alpha \in \R^{n_i}_{\ge 0}
\text{ : } \sum_\alpha x_\alpha = 1\}$; the state of the game is then
defined by the concatenation of strategies of all players. We call 
\emph{strategy space} the set of all possible strategies profiles, and write
it $\mathcal X \equiv \prod_{i \in V} \mathcal X_i$.

The bimatrix game on edge $(i, j)$ is described using a pair of matrices 
$A^{i, j} \in \R^{|\mathcal A_i| \times |\mathcal A_j|}$ and $A^{j, i} \in 
\R^{|\mathcal A_j| \times |\mathcal A_i|}$. The coefficient $(\alpha, \beta)
\in \mathcal A_i \times \mathcal A_j$ of the matrix $A^{i, j}$ represents the 
reward player $i$ gets when he plays $\alpha$ against player $j$ playing $\beta$.
As players can choose mixed strategies, their payoffs are random variables, yet
we call \emph{payoffs} again their expected payoffs. For instance, the payoff
of player $i$ against player $j$ is $x_i \cdot A^{i,j} x_j$.
We call \emph{payoff of agent $i \in V$} under strategy profile
$x$ the sum of the payoffs agent $i$ receives from every bimatrix
game he participates in, and write it $u_i(x)$ or  $u_i(x_i; x_{-i})$. More precisely, 
\begin{equation}
u_i(x) = \sum_{j \text{ : } (i, j) \in E} x_i \cdot A^{i, j} x_j
\end{equation}
Sometimes, one can be interested in the payoff of agent $i$ when deviating 
to action $\alpha \in \mathcal A_i$ under profile $x$. This quantity is usually 
denoted $u_{i,\alpha}(x)$ and corresponds to $\sum_{j=1}^N (A^{i, j} x_j)_{
\alpha}$. Finally, we will compactify the definition of a $N$-player graphical 
polymatrix game by a tuple $\Gamma = (G, A)$ with $G$ the underlying graph and 
$A$ the block matrix built from $A^{i, j}$'s. 

We say that a $N$-player graphical polymatrix game is \emph{zero-sum}
if the matrix $A$ is antisymmetric. In the case $N=2$ players $i, j = 1, 2$, 
it specifically means that $A^{1,2} = - (A^{2, 1})^T$; in the case $N=1$ player,
that $A^{1, 1} = A$ is antisymmetric. In our case, we allow the graph 
$G$ to contain \emph{self-loops}, and we call \emph{diagonal games} the subgames
induced by self-loops. Self-loop (1-agent) games make sense both in the content
of evolutionary game theory as well as in classic (multi-agent) game theory.
From the perspective of evolutionary game theory, 1-agent games are the norm 
where we study the frequencies of different competing genotypes within a single 
population. For example, Rock-Paper-Scissors could be different traits that
exhibit a cyclic pattern of dominance. In the context of classic game theory 
a single agent self-loop added e.g. on top of a standard normal form game can 
capture effects like friction in dynamics, e.g. the matrix with zero diagonal 
and -1 in all other entries captures the effects of having cost for changing 
strategy. Specifically, if an agent changes her strategy from yesterday, then 
in the self-loop game, she experiences an additional cost of 1. 
More generally, it allows to differentiate the performance of a strategy for an
agent depending on his strategy in the previous time period in game dynamics such
as replicator dynamics.

\begin{figure}[H]
  \centering
  \begin{tikzpicture}
    \node[circle, draw] (A) at (0, 1) {$P_0$};
    \draw[->, >=latex] (A) to[min distance=8mm, in=90-30, out=90+30] (A);
    
    \node[circle, draw] (A) at (-.75+3,0.5) {$P_1$};
    \node[circle, draw] (B) at (.75+3, 1.5) {$P_2$};
    \draw[->, >=latex] (A) to[bend right] (B);
    \draw[->, >=latex] (B) to[bend right] (A);

    \node[circle, draw] (A) at (0+6, 0) {$P_1$}; 
    \node[circle, draw] (B) at (3+6, .5) {$P_2$}; 
    \node[circle, draw] (C) at (1+6, 2) {$P_3$}; 
    
    \draw[->, >=latex] (A) to[bend right] (B);
    \draw[->, >=latex] (B) to (A);
    \draw[->, >=latex] (C) to (B);
    \draw[->, >=latex] (A) to[min distance=8mm, in=-140-30, out=-140+30] (A);
    \draw[->, >=latex] (B) to[min distance=8mm, in=-10+30, out=-10-30] (B);
  \end{tikzpicture} 
  \caption{From left to right, graphical representations of the evolutionary
  game setting, algorithmic game theory, and the merger of the two.}
\end{figure}
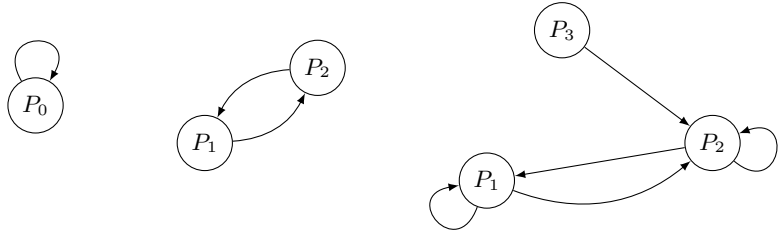

A very common notion in game theory is the one of \emph{Nash equilibrium} (NE),
defined in our case as a mixed strategy profile $x^* \in \mathcal X$ 
such that 
\begin{equation}
  u_i(x^*) \ge u_{i,\alpha_i}(x^*)
\end{equation}
for every strategy $\alpha_i \in \mathcal A_i$ of any player $i \in \mathcal N$.
We write $\supp (x_i^*) \equiv \{\alpha_i \in \mathcal A_i \text{ : }
x_{i, \alpha_i} > 0\}$ the support of $x_i^* \in \mathcal X_i$. A Nash 
equilibrium is said
\emph{interior} or \emph{fully mixed} if $\supp (x_i^*)$  for each agent $i$ is $\mathcal A_i$.

\subsection{Replicator dynamics}

The \emph{replicator equation} is one of the most well studied evolutionary processes. Its most usual formulation is:
\begin{equation}
  \label{rd}
  \dot x_{i, \alpha} = \frac{dx_{i,\alpha}}{dt}
  = x_{i, \alpha} \left(u_{i, \alpha}(x) - u_i(x)\right) 
\end{equation}
for every player $i$ and action $\alpha \in \mathcal A_i$. We will often 
translate (\ref{rd}) into \emph{cumulative costs space} via the diffeomorphism from
the interior of $\mathcal X$ to $\mathcal C \equiv \prod \R^{n_i-1}$,
used also in \cite{piliouras2014optimization}, that,
for each player $i$, maps $x_i = (x_{i,1} \ldots x_{i, n_i})$ to
$(\ln \frac{x_{i,2}}{x_{i, 1}} \ldots \ln \frac{x_{i, n_i}}{x_{i,1}})$.
We will write this diffeomorphism $\f$ and its inverse $\f^{-1}$. $\mathcal C$
is called this way since one can show that it corresponds to the space
of coordinates $\int_0^t u_{i,\alpha}(x(\tau)) d\tau$ up to a re-centralization
term (specifically,
$\int_0^t u_{i,\alpha}(x(\tau)) d\tau - \int_0^t u_{i,1}(x(\tau))d\tau$
for $\alpha > 1$). 

\subsection{Topology of dynamical systems}

\paragraph{Flows.} Since the strategy space is compact and the replicator
dynamics Lipschitz-continuous, there exists a continuous function 
$\phi : \mathcal X \times \R \rightarrow \mathcal X$ called \emph{flow of 
replicator dynamics (\ref{rd})} such that for any point $x \in \mathcal X$, 
$\phi(x, -)$ defines a function of time corresponding to the trajectory of $x$. 
Conversely, fixing a time $t$ provides a map $\phi^t \equiv \phi(-, t) :
\mathcal X \rightarrow \mathcal X$, and the family 
$\{\phi^t \text{ : } t \in \R\}$ is interestingly a subgroup of
$(\mathcal C(\mathcal X, \mathcal X), \circ)$. Moreover, if $\phi^t : 
A \rightarrow A$ and $\psi^t : B \rightarrow B$ are flows such that there
exists a diffeomorphism $g$ satisfying
$g(\phi^t(x)) = \psi^t(g(x))$ for all $x \in A$, then $\phi^t$ and $\psi^t$
are said to be \emph{diffeomorphic} to each other. 

\paragraph{Limit sets.} When $x \in \mathcal X$ is not a rest point of 
(\ref{rd}), we wish to grasp how the orbit of $x$ will asymptotically behave. 
In general,
its trajectory will not converge to a single point, but  to a  closed
set called the \emph{$\omega$-limit (set) of $x$}, written $\omega(x)$. 
This set is formally defined as the set of points $y \in \mathcal X$ such that
there exists a sequence $(t_n)$ diverging to $+\infty$ such that 
$\phi(x, t_n) \rightarrow y$. One alternative definition is 
$\omega(x)= \bigcap_{t \ge 0} \overline{\bigcup_{\tau \ge t}
\phi(x, \tau)}$. The compactness of $\omega(x)$ is an immediate consequence 
of the compactness of $\mathcal X$, and $\lim_{t \rightarrow +\infty} 
\dist(\phi(x, t), \omega(x)) = 0$.

\paragraph{Liouville's formula.} Liouville's formula can be applied to any
system of ordinary differential equations with a continuously differentiable
vector field $\xi$ on an open domain $\mathcal X \subseteq \R^d$. The divergence
of $\xi$ at $x \in \mathcal X$ is the trace of the Jacobian at $x$, that is,
$\divergence\; \xi(x) = \sum_{i=1}^d \frac{\partial \xi_i}{\partial x_i}(x)$.
Because the divergence is continuous, it is integrable on measurable subsets 
of $\mathcal X$. Given any such set $A$, define the image of $A$ under the 
flow $\phi$ at time $t$ as $A(t) = \{\phi(a, t) \text{ : } a \in A\}$. 
$A(t)$ is measurable and of volume $\vol[A(t)] = \int_{A(t)} d\mu$.
Liouville's formula states that the time derivative of the volume
$\vol[A(t)]$ exists and links it to the divergence of $\xi$,
\begin{equation}
  \label{liouville}
  \frac d{dt}\bigg[\vol \; A(t)\bigg] = \int_{A(t)} \divergence(\xi) \;d\mu
\end{equation}
One immediate consequence is that if $\divergence \;\xi(x)$ is null at any
$x \in \mathcal X$, then the volume is conserved. As $\divergence \; \xi$ is
clearly a continuous function, the reverse statement is also true. If the 
volume is preserved on any open set, $\divergence \;\xi(x)$ has to be null
at any point $x \in \mathcal X$.

\paragraph{Poincar\'e recurrence.}
This paper is focused on a recurrence behavior introduced by Poincar\'e and
more precisely by his studies on the three body problem. He proved 
\cite{Poincare1890} that 
as soon as a dynamical system preserves volume and that every orbit
remains bounded, almost all trajectories return
arbitrarily close to their initial position, and do so infinitely often.

\begin{theorem}[Poincar\'e recurrence] \cite{barreira}
If a flow preserves volume and has only bounded orbits then for each open set,
almost all orbits intersecting the set intersect it infinitely often.  
\end{theorem}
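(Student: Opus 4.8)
The plan is to reduce the continuous-time statement to a discrete-time one and run Poincar\'e's classical pigeonhole argument inside a finite-volume region. Fix any $\tau>0$ and put $T \equiv \phi^\tau$. Liouville's formula~(\ref{liouville}) with $\divergence\,\xi \equiv 0$ gives $\vol[A(t)] = \vol[A]$ for every measurable $A$ and every $t$, so $T$ and $T^{-1}=\phi^{-\tau}$ are measure-preserving bijections. Since all orbits are bounded, I would decompose a given open set $U$ (assumed of positive volume, the remaining case being vacuous) as $U = \bigcup_k U_k$ with $U_k \equiv \{x \in U : \text{the orbit of } x \text{ lies in the ball } B(0,k)\}$, and argue one $U_k$ at a time: $U_k$ is measurable, and its forward $T$-orbit stays inside the finite-volume set $B(0,k)$.

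For the pigeonhole step, let $W_k \equiv \{x \in U_k : T^n x \notin U_k \text{ for all } n \ge 1\}$ be the non-returning points of $U_k$. The sets $W_k, TW_k, T^2 W_k, \ldots$ are pairwise disjoint: if $T^i W_k$ met $T^j W_k$ for some $0 \le i < j$, applying $T^{-i}$ would yield $x \in W_k$ with $T^{\,j-i}x \in W_k \subseteq U_k$ and $j-i\ge 1$, contradicting the definition of $W_k$. They all have volume $\mu(W_k)$ and all lie in $B(0,k)$, so an infinite disjoint family of equal-volume sets inside a finite-volume region forces $\mu(W_k)=0$. Since $U_k \subseteq U$, a point of $U_k$ that never returns to $U$ in particular never returns to $U_k$, hence lies in $W_k$; taking the union over $k$, the set of points of $U$ that never return to $U$ under $T$ is null.

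To upgrade ``returns at least once'' to ``returns infinitely often'', write $W \equiv \{y \in U : T^m y \notin U \text{ for all } m \ge 1\}$, which is null by the previous step. If $x \in U$ returns to $U$ only finitely often, then for the last visit time $n$ we have $T^n x \in W$, so $x \in \bigcup_{n\ge 0} T^{-n}W$, a countable union of null sets (as $T$ preserves measure); hence almost every $x \in U$ has $\phi(x,t) \in U$ for $t = n\tau$ with infinitely many $n$, so at arbitrarily large times, and the same for $t \to -\infty$ by applying the argument to $T^{-1}$. Finally, an orbit that merely \emph{intersects} $U$, say $\phi(y,t_0)\in U$, is the same orbit as that of $x \equiv \phi(y,t_0) \in U$, so it returns to $U$ infinitely often iff the orbit of $x$ does; and because $U$ is open and $t \mapsto \phi(y,t)$ is continuous, every orbit meeting $U$ meets it at some time in a fixed countable dense set $D \subseteq \R$. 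Thus the exceptional set of bad orbits is $\bigcup_{t_0 \in D}\phi^{-t_0}(N)$, where $N \subseteq U$ is the null set of bad points of $U$ --- again a countable union of null sets.

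I expect the only genuinely delicate points to be bookkeeping: converting ``bounded orbits'' into an honest finite-volume region on which the pigeonhole bites (the $U_k$-decomposition above), and the last reduction from ``orbits intersecting $U$'' to ``points of $U$'', which needs the dense-time pull-back. The combinatorial heart --- disjointness of $\{T^n W_k\}$ forcing $\mu(W_k)=0$ --- is immediate once measure preservation of the time-$\tau$ map is in hand, and that follows directly from Liouville's formula recalled in the preliminaries.
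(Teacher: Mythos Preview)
The paper does not prove this theorem; it is quoted as a classical result with a citation to \cite{barreira} and is used as a black box in the proof of Theorem~\ref{general-thm}. So there is no ``paper's own proof'' to compare against.

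That said, your argument is the standard Poincar\'e pigeonhole proof and is correct. The reduction to the time-$\tau$ map, the decomposition $U = \bigcup_k U_k$ to secure a finite-volume ambient region, the disjointness of $\{T^n W_k\}_{n\ge 0}$ forcing $\mu(W_k)=0$, the upgrade to infinitely many returns via $\bigcup_n T^{-n}W$, and the final pull-back along a countable dense set of times to pass from ``points of $U$'' to ``orbits meeting $U$'' are all sound. The measurability of $U_k$ (a countable intersection over rational times of closed conditions) and of the various images/preimages under the diffeomorphism $T$ is unproblematic. In short: your write-up supplies exactly the proof the paper chose to outsource to the literature.
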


\subsection{Volume conservation and periodicity in Rock-Paper-Scissors}

\label{front-diagram}

The front page figure shows the evolution of a set of initial conditions (black square) under replicator
dynamics (\ref{rd}) in the (projected)  cumulative payoff space; the 
game is the classic one agent Rock-Paper-Scissors with payoff matrix
\begin{equation*}
\begin{pmatrix}
  0 & 1 & -1 \\ -1 & 0 & 1 \\ 1 & -1 & 0
\end{pmatrix}
\end{equation*}
In the first row of the figure
from left to right, we plot the evolution of
a set at times $t = 0$,
$t = 112$ and $t = 225$. The colormap represents the Kullback-Leibler divergence
to the unique Nash equilibrium $x = (\frac 13, \frac 13, \frac 13)$ (null 
vector in cumulative payoffs space). Observe that any point stays at the same
color at which it started, i.e., its Kullback-Leibler divergence from the Nash
equilibrium does not change. On the second row are the corresponding 
plots of the Kullback-Leibler divergence of points ($y$-axis) according to their 
first coordinate ($y_1$ in $\mathcal C$). The red curve is the minimum possible
value for each $y_1$ values, that is, an analogue of \emph{potential energy}.
Intuitively, any initial condition will slide along an horizontal level set
(of constant $\kl$-divergence) and cannot escape outside the red curve.
The third row shows an estimation of the volume of the cloud of points over 
time. This volume is estimated using a pruned Delaunay triangulation, more 
precisely, triangles with a diameter larger than some threshold value are 
deleted, and the volume is computed as the sum of the volume of each remaining
triangle.

Even though the shape of the initial condition is not preserved, the 
overall volume is  constant over time. This spiralling snake shape results
from periodic orbits of different periods.

\section{Volume conservation: Necessary and sufficient conditions}

\subsection{Zero-sum games are volume conservative}

Replicator dynamics in multi-games (with no loops) are volume
conservative (even beyond replicator dynamics \cite{mertikopoulos2017cycles}). 
The reason becomes clear when we examine the differential equation 
satisfied by cumulative payoffs $y_{i,\alpha}$'s.
\begin{equation}
  \frac{dy_{i,\alpha}}{dt}(t) = u_{i, \alpha+1}\left(\f^{-1}(y)\right) - 
  u_{i, 1}
  \left(\f^{-1}(y)\right)
\end{equation}
Recall that $\f^{-1}$ acts like a set-wise product function, working locally
at each player. Hence, as long as $u_{i, \beta}$ does not depend on $x_i$, 
$u_{i, \beta}$ is independent of $y_{i, \alpha}$ for any  pair of actions $\alpha, \beta$. The partial derivative
$\frac{\partial \dot y_{i, \alpha}}{\partial y_{i, \alpha}}(y)$ is null. One 
can understand this as follows: \emph{if the performance of an action only depends
on the behavior of the rest of the agents, then the volume is preserved.} 
In single agent games, antisymmetry implies volume preservation \cite{Sandholm10}.
We show that these results can be combined and that there is no need to have null diagonal games to get volume 
preservation in multi-agent games. The zero-sum property is enough to guarantee it.

\begin{theorem}
  \label{volcons-general}
  Let $\phi$ be the flow of replicator dynamics (\ref{rd}) with $N$ agents. 
  Let $\psi(y, -) = \f(\phi(\f^{-1}(y), -))$ be the diffeomorphic flow onto 
  cumulative payoffs space. If all diagonal games are zero-sum, then
  $\psi$ is volume conservative.
\end{theorem}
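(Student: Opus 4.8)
The plan is to compute the divergence of the vector field governing the dynamics in cumulative-payoff space and show it vanishes identically; Liouville's formula (\ref{liouville}) then gives volume conservation for $\psi$. Concretely, in coordinates $y = (y_{i,\alpha})$ with $i \in V$ and $\alpha \in \{1,\dots,n_i-1\}$, the vector field is $\xi_{i,\alpha}(y) = u_{i,\alpha+1}(\f^{-1}(y)) - u_{i,1}(\f^{-1}(y))$, and what must be shown is
\begin{equation*}
  \divergence \xi(y) = \sum_{i \in V} \sum_{\alpha=1}^{n_i-1} \frac{\partial \xi_{i,\alpha}}{\partial y_{i,\alpha}}(y) = 0
  \quad \text{for all } y \in \mathcal C.
\end{equation*}
The only terms that can contribute are the ``self-terms'': $\partial \xi_{i,\alpha} / \partial y_{i,\alpha}$ picks up dependence of $u_{i,\cdot}$ on $x_i$, which by the remark preceding the theorem comes exactly from the diagonal game $A^{i,i}$ (all off-diagonal payoffs $u_{i,\cdot}$ depend only on $x_j$, $j \neq i$, hence are $y_{i,\cdot}$-independent). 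So the computation localizes: it suffices to show that for each fixed player $i$, the contribution $\sum_{\alpha=1}^{n_i-1} \partial_{y_{i,\alpha}}\big[ (A^{i,i}x_i)_{\alpha+1} - (A^{i,i}x_i)_1 \big]$ vanishes, where $x_i = \f^{-1}(y)_i$ depends only on the block $y_{i,\cdot}$.

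The key step is therefore the single-player chain-rule computation. Writing $B := A^{i,i}$ and recalling $\f^{-1}$ on player $i$'s block sends $(y_{i,1},\dots,y_{i,n_i-1})$ to $x_i$ with $x_{i,1} = 1/(1 + \sum_\beta e^{y_{i,\beta}})$ and $x_{i,\beta+1} = e^{y_{i,\beta}} x_{i,1}$, one computes
\begin{equation*}
  \frac{\partial x_{i,\gamma}}{\partial y_{i,\alpha}} = x_{i,\alpha+1}(\delta_{\gamma,\alpha+1} - x_{i,\gamma}),
\end{equation*}
so that $\partial_{y_{i,\alpha}} (Bx_i)_k = \sum_\gamma B_{k\gamma} x_{i,\alpha+1}(\delta_{\gamma,\alpha+1} - x_{i,\gamma}) = x_{i,\alpha+1}\big(B_{k,\alpha+1} - (Bx_i)_k\big)$. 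Hence the self-term for action $\alpha$ is
\begin{equation*}
  \frac{\partial \xi_{i,\alpha}}{\partial y_{i,\alpha}} = x_{i,\alpha+1}\Big( B_{\alpha+1,\alpha+1} - B_{1,\alpha+1} - (Bx_i)_{\alpha+1} + (Bx_i)_1 \Big).
\end{equation*}
Now sum over $\alpha = 1,\dots,n_i-1$, i.e. over $\alpha+1 = 2,\dots,n_i$. The plan is to rewrite this as $\sum_{k=2}^{n_i} x_{i,k}\big(B_{kk} - B_{1k}\big) - \sum_{k=2}^{n_i} x_{i,k}\big((Bx_i)_k - (Bx_i)_1\big)$ and show both pieces vanish using antisymmetry of $B$ (the diagonal-game zero-sum hypothesis: $B^T = -B$, in particular $B_{kk} = 0$). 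Antisymmetry kills $B_{kk}$; for the remaining terms one uses the quadratic-form identity $x_i \cdot B x_i = 0$ (valid since $B$ is antisymmetric), together with $\sum_k x_{i,k} = 1$, to see that $\sum_{k} x_{i,k}(Bx_i)_k = 0$ and $\sum_{k=2}^{n_i} x_{i,k} B_{1k} = (Bx_i)_1 - x_{i,1}B_{11} = (Bx_i)_1$, and then carefully match the leftover $(Bx_i)_1$ bookkeeping so that everything cancels. I would organize this as: (i) reduce to per-player self-terms; (ii) the chain-rule lemma above; (iii) the antisymmetry cancellation.

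The main obstacle I anticipate is purely the bookkeeping in step (iii): making sure the re-indexing between $\alpha \in \{1,\dots,n_i-1\}$ and $k = \alpha+1 \in \{2,\dots,n_i\}$ is handled consistently, and that the ``missing'' $k=1$ terms ($x_{i,1}B_{11}$, $x_{i,1}(Bx_i)_1$, etc.) are accounted for when one wants to invoke the clean identities $\sum_{k=1}^{n_i} x_{i,k}(Bx_i)_k = x_i \cdot Bx_i = 0$ and $\sum_{k=1}^{n_i} x_{i,k}B_{jk} = (Bx_i)_j$. There is no conceptual difficulty — antisymmetry does all the real work — but it is easy to drop a term. A cleaner alternative that I would consider presenting instead: observe that the per-player self-contribution equals $\partial_{y_{i,\cdot}}$ of the diagonal-game replicator field expressed on the sublevel coordinates, which is precisely the single-agent case already known to be volume-preserving for antisymmetric payoffs \cite{Sandholm10}; then Theorem \ref{volcons-general} follows by summing the (independently) vanishing single-player divergences with the identically-zero cross terms. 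Either way, the heart is: cross-terms vanish because off-diagonal payoffs are independent of own strategy, and diagonal self-terms vanish by antisymmetry of the diagonal game.
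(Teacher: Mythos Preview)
Your proposal is correct and follows essentially the same route as the paper: compute $\partial x_{i,\gamma}/\partial y_{i,\alpha} = x_{i,\alpha+1}(\delta_{\gamma,\alpha+1} - x_{i,\gamma})$, apply the chain rule to the payoff field, sum the resulting self-terms, and cancel using antisymmetry of $A^{i,i}$. The only organizational difference is that the paper first isolates a game-agnostic divergence formula (Lemma~\ref{div-formula}), $\divergence = \sum_{i,\alpha,\beta} x_{i,\alpha}x_{i,\beta}(\partial_{x_{i,\alpha}}u_{i,\alpha} - \partial_{x_{i,\beta}}u_{i,\alpha})$, and then specializes---this abstraction is later reused for the converse (Theorem~\ref{vol-zero-sum})---whereas you substitute $B = A^{i,i}$ immediately and close with $x_i\cdot Bx_i = 0$; the two cancellations are equivalent.
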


We know that if there are no games on the diagonal, the volume is preserved.
The intuition is that if each diagonal game preserves volume individually,
there will be volume preservation; this is the main point. We rely on
Liouville's formula (\ref{liouville}) by computing the divergence in the 
general case, and check that it is null if all diagonal games are zero-sum. 
This proves that in $N$-player polymatrix games with loops, as long as loops
are antisymmetric games, the \emph{quantity of information} is preserved in 
cumulative payoffs space. It also means that it will be hard to converge; 
for instance, no interior rest point cannot be locally attractive. Indeed, if that was 
the case, it would mean that locally, the volume would shrink around the rest point.

\subsection{Volume conservative games are zero-sum}

What is even more interesting is that the inverse statement is also 
true. The preservation of information in cumulative payoff space is 
specific to zero-sum diagonal games. We do not mean that a non-zero-sum game
cannot preserve volume at some points, but rather that preserving volume 
at many points implies the zero-sum property. The precise number of points
can be controlled by combinatorial Nullstellensatz arguments, and more
precisely, the relation between a multivariate polynomial and the geometry
of its vanishing set.
  
The argument is that the divergence of the vector field (in cumulative 
space) is a multivariate polynomial, and in particular, this polynomial 
vanishes exactly at points were the volume is preserved.

We give a proof for the evolutionary game theory settings, but it can be 
easily transported to $N$-player polymatrix games, notations would merely 
become heavier.
\begin{theorem}
  \label{vol-zero-sum}
  Consider a 1-player game $\Gamma$. 
  Let $\phi$ be the flow of replicator dynamics 
  and $\psi$ its diffeomorphic conjugate onto the cumulative payoffs space. 
  If there exists 
  an open set $U$ of $\R^{|\mathcal A|-1}$ such that $\psi$ is volume 
  conservative at any point of $U$, then $\Gamma$ is equivalent to a 
  1-player zero-sum game (summation of an antisymmetric matrix and a matrix of the form 
  $
  \begin{pmatrix} 1 & \cdots & 1 \end{pmatrix}^T\begin{pmatrix} c_{1} & \cdots & c_{n} \end{pmatrix}$).
\end{theorem}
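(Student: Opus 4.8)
\emph{Proof plan.} The plan is to write $\divergence \xi$ for the conjugate flow $\psi$ in closed form, observe that it is the restriction to the simplex of a quadratic polynomial in the mixed strategy $x$, and then argue that this polynomial can vanish on an open set only if $A$ satisfies a rigid ``parallelogram identity'' that is equivalent to $A$ being antisymmetric up to a column-constant matrix. The forward implication is already contained in Theorem~\ref{volcons-general}, so the work is the converse.

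First I would differentiate through the change of variables. Writing the conjugate dynamics on $\R^{|\mathcal A|-1}$ as $\dot y_\alpha = \xi_\alpha(y) := u_{\alpha+1}(\f^{-1}(y)) - u_1(\f^{-1}(y))$ and setting $x = \f^{-1}(y)$, a direct computation gives $\partial x_\gamma/\partial y_\alpha = x_\gamma(\delta_{\gamma,\alpha+1} - x_{\alpha+1})$, hence $\partial u_\beta(x)/\partial y_\alpha = x_{\alpha+1}(A_{\beta,\alpha+1} - u_\beta(x))$ for every action $\beta$. Taking the trace of the Jacobian of $\xi$ and reindexing by $k=\alpha+1$ yields
\begin{equation*}
  \divergence \xi(y) \;=\; \sum_{k=2}^{|\mathcal A|} x_k\Big[(A_{kk}-A_{1k}) - \big(u_k(x) - u_1(x)\big)\Big], \qquad x = \f^{-1}(y),
\end{equation*}
and since $u_k(x) - u_1(x) = \sum_\gamma (A_{k\gamma}-A_{1\gamma})\,x_\gamma$ is affine in $x$, the right-hand side is the restriction to the simplex of a polynomial $D(x)$ of degree at most $2$.

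Next I would turn the open-set hypothesis into a polynomial identity. As $\f^{-1}$ is a diffeomorphism onto $\interior \mathcal X$, the set $\f^{-1}(U)$ is a nonempty relatively open subset of $\interior \mathcal X$ on which $D$ vanishes. Homogenizing the linear part of $D$ with the simplex constraint $\sum_\gamma x_\gamma = 1$ produces a quadratic form $x^T M x$ that agrees with $D$ on the hyperplane $\{\sum_\gamma x_\gamma = 1\}$, where $M_{k\gamma} = (A_{kk}-A_{1k}) - (A_{k\gamma}-A_{1\gamma})$ for $k\ge 2$ and $M_{1\gamma}=0$. Being homogeneous and vanishing on an open subset of that hyperplane, $x^T M x$ vanishes on the cone over this subset, which is dense in $\R^{|\mathcal A|}$; hence $x^T M x \equiv 0$ and $M + M^T = 0$. (This is where the vanishing-set / Nullstellensatz-flavoured remark in the statement enters; for an open set it is just the identity theorem for polynomials.) Expanding $M_{pq}+M_{qp}=0$ (the diagonal $M_{pp}=0$ being automatic) collapses to the single family of relations $A_{pq}+A_{qp} = A_{pp}+A_{qq}$ for all $p,q \in \mathcal A$. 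Finally I would set $c_q := A_{qq}$ and $B := A - (1,\dots,1)^T(c_1,\dots,c_n)$: then $B_{pq}+B_{qp} = (A_{pq}+A_{qp}) - (A_{pp}+A_{qq}) = 0$, so $B$ is antisymmetric and $A$ has exactly the claimed form. Conversely, adding $(1,\dots,1)^T(c_1,\dots,c_n)$ changes neither $u_{\alpha+1}-u_1$ nor $\xi$, so by Theorem~\ref{volcons-general} every such game is volume conservative everywhere, and the characterization is tight.

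The computational core is the divergence formula obtained after the nonlinear substitution $\f^{-1}$, but the genuinely delicate step is the second one: one must ensure that vanishing of $\divergence\xi$ on an \emph{open} subset of cumulative-payoff space forces a \emph{global} polynomial identity on $\R^{|\mathcal A|}$, which needs both that $\f^{-1}$ is an open map onto $\interior\mathcal X$ and the homogenization trick that upgrades ``open in a hyperplane'' to ``dense in $\R^{|\mathcal A|}$''. Everything downstream of that is elementary linear algebra.
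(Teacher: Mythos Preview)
Your proof is correct and follows the same route as the paper's: compute the divergence of $\psi$ as a quadratic expression in $x$ (your formula agrees with the paper's $\sum_\alpha x_\alpha A_{\alpha\alpha}-\sum_{\alpha,\beta}x_\alpha x_\beta A_{\alpha\beta}$, obtained from Lemma~\ref{div-formula}), argue that vanishing on an open set forces a polynomial identity, and extract the parallelogram relation $A_{pq}+A_{qp}=A_{pp}+A_{qq}$, which is exactly the paper's equation~(\ref{equiv-zs}). The only cosmetic difference is in how that relation is read off---the paper substitutes $x_n=1-\sum_{\alpha<n}x_\alpha$ and inspects the $x_\alpha^2$ coefficient, whereas you homogenize to a quadratic form on $\R^n$ and use $M+M^\top=0$ (minor slip: the cone you build is \emph{open} in $\R^{|\mathcal A|}$, not dense, but open is already sufficient for the identity-theorem step).
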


\begin{proof}
  Let us rewrite $\mathcal A = \{1 \ldots n\}$, and
  $A = (A_{\alpha, \beta})$ the matrix corresponding to $\Gamma$.
  The existence of such an open set $U$ induces another \emph{interior} open set
  $V$ of $\mathcal X$ such that for any point $x$ of $V$,
  the divergence of $\frac{\partial \psi}{\partial t}$ at $y = \f(x)$ is null. 
  A general formula for this divergence is given in the appendix (see 
  Lemma~\ref{div-formula}). Using it, we get
  \begin{equation}
  \label{monomial}
  \divergence \;\frac{\partial \psi}{\partial t}(y) =
  \sum_{\alpha=1}^n x_\alpha A_{\alpha, \alpha} 
  - \sum_{\alpha=1}^n \sum_{\beta=1}^n x_\alpha x_\beta A_{\alpha, \beta}
  \end{equation}
  The above equality means that for any $x$ of $V$,
  $\sum_\alpha x_\alpha A_{\alpha, \alpha} = 
  \sum_{\alpha, \beta} x_\alpha x_\beta A_{\alpha, \beta}$. Take any 
  action $\gamma \in \{1 \ldots n\}$, without loss of generality action
  $\gamma = n$. We have $x_n = 1 - x_1 - \ldots - x_{n-1}$. Saying that the 
  divergence of $\frac{\partial \psi}{\partial t}$ is null on $V$
  means that the multivariate polynomial (\ref{monomial}) of 
  $[x_1 \ldots x_{n-1}]$ vanishes 
  on the open set $V$. Since $V$ is interior and open, the latest polynomial 
  have to be null \cite{alon1999combinatorial}. 
  Accordingly, all the coefficients of (\ref{monomial}) are
  zero's.

  In particular, the coefficient of $x_\alpha^2$ is null; but developing 
  (\ref{monomial}), this coefficient is precisely 
  $A_{\alpha, \alpha} + A_{n, n} - A_{\alpha, n} - A_{n, \alpha}$. As we took 
  $\gamma = n$ \emph{wlog}, we just proved that $A_{\alpha, \alpha} + 
  A_{\beta, \beta} - A_{\alpha, \beta} - A_{\beta, \alpha} = 0$, that is
  \begin{equation}
  \label{equiv-zs}
    A_{\alpha, \alpha} - A_{\alpha, \beta} 
    = - (A_{\beta, \beta} - A_{\beta, \alpha})
  \end{equation}
  for any actions $\alpha$ and $\beta$. This means that $A$ can be rewritten 
  as the sum of an antisymmetric matrix $B$ and a column-constant matrix. 
  Specifically,
  \begin{equation}
    A = B + \begin{pmatrix} 1 & \cdots & 1 \end{pmatrix}^T
    \begin{pmatrix} A_{1, 1} & \cdots & A_{n, n} \end{pmatrix}
  \end{equation}
  with $B$ antisymmetric. It is easy to see \cite{Hofbauer98} that
  the flow of replicator dynamics with matrix $B$ is the same as the one
  with matrix $A$, hence (from the perspective of replicator dynamics) $\Gamma$ is equivalent to a zero-sum game.
\qed
\end{proof}

One is easily convinced that this is generalizable to much more general games,
for e.g. polymatrix games,
by adapting the proof the following way: the multivariate polynomial's variables 
are strategies $x_{i, \alpha}$,
and since the divergence of the vector field is separable on each players,
we get the exact same condition for each diagonal game. Therefore, we have
the more stronger result.

\begin{theorem} 
\label{coro-vol}
A $N$-player (polymatrix)\footnote{The theorem straightforwardly extends to any game that can be rewritten as the sum of a $N$-player game in
normal form and self-edges games (even without the polymatrix condition).} game  
 is volume conservative
in cumulative payoffs space if, and only if its diagonal games are equivalent
to zero-sum games (summation of an antisymmetric matrix and a matrix of the form $
  \begin{pmatrix} 1 & \cdots & 1 \end{pmatrix}^T\begin{pmatrix} c_{1} & \cdots & c_{n} \end{pmatrix}$).
\end{theorem}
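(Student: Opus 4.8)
The plan is to bootstrap from the single-player computation already carried out for Theorem~\ref{vol-zero-sum}, using the fact that in cumulative payoffs space the divergence of the vector field \emph{separates over the players}. Writing $x=\f^{-1}(y)$, we have $\dot y_{i,\alpha}=u_{i,\alpha+1}(x)-u_{i,1}(x)=\sum_{j\,:\,(i,j)\in E}\big[(A^{i,j}x_j)_{\alpha+1}-(A^{i,j}x_j)_1\big]$, and since $\f^{-1}$ is applied player-by-player the coordinate $y_{i,\alpha}$ affects $x$ only through $x_i$; hence in $\partial\dot y_{i,\alpha}/\partial y_{i,\alpha}$ only the self-loop summand $j=i$ survives. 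Combining this with the formula of Lemma~\ref{div-formula} one gets
\[
\divergence\;\frac{\partial\psi}{\partial t}(y)\;=\;\sum_{i\in V}g_i(x_i),\qquad g_i(x_i):=\sum_{\alpha}x_{i,\alpha}A^{i,i}_{\alpha,\alpha}-\sum_{\alpha,\beta}x_{i,\alpha}x_{i,\beta}A^{i,i}_{\alpha,\beta},
\]
i.e.\ $g_i$ is exactly the single-player divergence~(\ref{monomial}) for the diagonal matrix $A^{i,i}$. For the footnoted extension this step is unchanged: a term coming from an ordinary normal-form edge contributes nothing to $\partial\dot y_{i,\alpha}/\partial y_{i,\alpha}$, since $u_{i,\alpha}$ is by definition independent of $x_i$. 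Establishing this decomposition cleanly --- matching the general polymatrix Jacobian against $N$ decoupled copies of the one-player Jacobian --- is the only genuinely delicate point; everything after it is a transcription of the one-player arguments.

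For sufficiency ($\Leftarrow$): if each $A^{i,i}$ equals an antisymmetric matrix $B^i$ plus a column-constant matrix $\mathbf{1}\,(c^i_1,\dots,c^i_{n_i})$, then substituting into $g_i$ and using $\sum_\alpha x_{i,\alpha}=1$, $B^i_{\alpha,\alpha}=0$, and $\sum_{\alpha,\beta}x_{i,\alpha}x_{i,\beta}B^i_{\alpha,\beta}=0$ (antisymmetry), each $g_i$ vanishes identically, so $\divergence\,\partial_t\psi\equiv0$ and Liouville's formula~(\ref{liouville}) yields volume conservation. This recovers Theorem~\ref{volcons-general} under the mildly weaker hypothesis ``equivalent to zero-sum'' in place of ``zero-sum''.

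For necessity ($\Rightarrow$): suppose $\psi$ is volume-conservative, even just on a nonempty open set $U\subseteq\mathcal C$. Since $\divergence\,\partial_t\psi$ is continuous it vanishes on $U$, hence on the nonempty open set $V=\f^{-1}(U)\subseteq\interior\mathcal X$, which we may shrink to a product box $\prod_{i}V_i$ with each $V_i\subseteq\interior\mathcal X_i$ open and nonempty. On this box $\sum_i g_i(x_i)\equiv0$, and freezing every block but one shows each $g_i$ is constant on $V_i$. Eliminating $x_{i,n_i}=1-\sum_{\alpha<n_i}x_{i,\alpha}$ turns $g_i$ into a polynomial in $(x_{i,1},\dots,x_{i,n_i-1})$ that is constant on an open subset of $\R^{n_i-1}$, hence --- by the same vanishing-on-an-open-set argument \cite{alon1999combinatorial} used in Theorem~\ref{vol-zero-sum} --- a constant polynomial, so all its positive-degree coefficients vanish. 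The coefficient of $x_{i,\alpha}^2$ is $A^{i,i}_{\alpha,\alpha}+A^{i,i}_{n_i,n_i}-A^{i,i}_{\alpha,n_i}-A^{i,i}_{n_i,\alpha}$, and since the pivot action $n_i$ was chosen without loss of generality we obtain $A^{i,i}_{\alpha,\alpha}-A^{i,i}_{\alpha,\beta}=-(A^{i,i}_{\beta,\beta}-A^{i,i}_{\beta,\alpha})$ for all $\alpha,\beta$, which is precisely the statement that $A^{i,i}$ is antisymmetric plus column-constant, i.e.\ equivalent to a zero-sum game. Together with the previous paragraph this proves Theorem~\ref{coro-vol}.
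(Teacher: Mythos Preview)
Your proof is correct and follows essentially the same route the paper sketches: use the divergence formula of Lemma~\ref{div-formula} to see that only the self-loop terms survive, so $\divergence\,\partial_t\psi$ decomposes as a sum $\sum_i g_i(x_i)$ of one-player divergences, and then run the argument of Theorem~\ref{vol-zero-sum} on each block. Your freezing argument (fixing all $x_j$, $j\neq i$, on a product sub-box to force each $g_i$ to be constant, hence with vanishing positive-degree coefficients) is exactly the rigorous reading of the paper's phrase ``the divergence of the vector field is separable on each player, so we get the same condition for each diagonal game''; the paper gives no further detail than that.
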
 

This formaly shows that volume preservation strongly
correlates with zero-sum games. Furthermore, a polymatrix game that conserves 
volume on an open set \emph{has} to conserve volume everywhere. Observe that 
we could have been less restrictive on the assumption relating the geometry
of the vanishing set, so there is room to improve this result.
The \emph{take home idea} may be \emph{if diagonal games
are not zero-sum, the volume cannot be preserved at too many points}.

\section{Limit behavior: 
Poincar\'e recurrence, cycles and convergence to boundary}

\subsection{Zero-sum games with interior Nash are Poincar\'e recurrent}

We generalize previous result from \cite{piliouras2014optimization}. 
It is already known that zero-sum
polymatrix games with no loops are volume conservative, and that they exhibit
Poincar\'e recurrence behavior when there exists an interior Nash equilibrium.
In fact, this is also true for polymatrix games allowing self-loops. The 
proof is the same in its structure, but the existence of self-loops requires to 
use different arguments. The volume preservation is already given by 
Theorem~\ref{volcons-general} from previous section. The idea is to prove that,
under the assumption of the existence of an interior Nash, the Kullback-Leibler
divergence is a constant of motion, and that this implies that every orbit is 
bounded in cumulative payoffs space. Then, the Poincar\'e recurrence theorem applies.

\begin{theorem}
  \label{general-thm}
  Consider a $N$-player zero-sum polymatrix game with self-loops. 
  Assume there exists 
  an interior Nash equilibrium, then replicator dynamics is Poincar\'e recurrent.
\end{theorem}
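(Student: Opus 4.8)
The plan is to mirror the standard recurrence argument for zero-sum games, adapting it to the presence of self-loops. Volume conservation in cumulative payoff space is already handed to us by Theorem~\ref{volcons-general}, since all diagonal games are assumed zero-sum (antisymmetric). By the Poincar\'e recurrence theorem, it therefore suffices to show that every orbit of the diffeomorphic flow $\psi$ on $\mathcal C$ remains bounded. For this, the natural Lyapunov-type candidate is the Kullback-Leibler divergence from the interior Nash equilibrium $x^*$, namely $H(x) = \sum_{i}\sum_{\alpha} x^*_{i,\alpha}\ln\frac{x^*_{i,\alpha}}{x_{i,\alpha}}$, and the key computational step is to show that $\frac{d}{dt}H(\phi(x,t)) = 0$ along trajectories.

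First I would compute $\dot H$ under the replicator flow \eqref{rd}. Using $\dot x_{i,\alpha} = x_{i,\alpha}(u_{i,\alpha}(x) - u_i(x))$, one gets $\dot H = -\sum_{i,\alpha} x^*_{i,\alpha}\frac{\dot x_{i,\alpha}}{x_{i,\alpha}} = -\sum_i \big(\sum_\alpha x^*_{i,\alpha} u_{i,\alpha}(x) - u_i(x)\big) = -\sum_i \big(u_i(x^*_i; x_{-i}) - u_i(x_i;x_{-i})\big)$. Now I would invoke the zero-sum structure: summing the per-agent payoff over all agents, including the self-loop contributions, the antisymmetry of the full block matrix $A$ forces $\sum_i u_i(z) = 0$ for \emph{any} profile $z$, and more importantly a bilinear identity holds, $\sum_i u_i(x^*_i; x_{-i}) + \sum_i u_i(x_i; x^*_{-i}) = \sum_i u_i(x^*) + \sum_i u_i(x) = 0$ when all cross terms and self-terms are collected, because each bimatrix pair contributes $x_i\cdot A^{i,j}x_j + x_j\cdot A^{j,i}x_i$ and each self-loop contributes $x_i\cdot A^{i,i}x_i$, all of which cancel under antisymmetry. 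Combined with the Nash inequality $u_i(x^*) \ge u_{i,\alpha}(x^*)$ — which for an \emph{interior} equilibrium becomes the equality $u_i(x^*_i;x^*_{-i}) = u_{i,\alpha}(x^*)$ on the full support — one concludes $\sum_i u_i(x^*_i;x_{-i}) = \sum_i u_i(x_i;x^*_{-i})$ and hence, using linearity and the zero-sum identity, that $\dot H = 0$. So $H$ is a constant of motion.

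Next, since $H$ is conserved and the sublevel sets $\{x : H(x) \le c\}$ are compact subsets of the interior of $\mathcal X$ (the KL divergence from a fully-mixed $x^*$ blows up as $x$ approaches $\partial\mathcal X$), every orbit starting in the interior stays within a compact set bounded away from the boundary. Transporting through the diffeomorphism $\f$, the image orbit in $\mathcal C$ is likewise contained in a compact set, hence bounded. Thus $\psi$ preserves volume and has only bounded orbits, and the Poincar\'e recurrence theorem applies: for every open set, almost all orbits meeting it return to it infinitely often. Pulling back via $\f^{-1}$, the same holds for $\phi$ on the interior of $\mathcal X$, which is the claimed conclusion.

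The main obstacle, and the place where the self-loops genuinely matter, is the bookkeeping in the bilinear cancellation: one must verify that the diagonal (self-loop) games contribute terms $x_i \cdot A^{i,i}x_i$ and $x^*_i\cdot A^{i,i}x_i$ in a way that still telescopes, which uses precisely that each $A^{i,i}$ is antisymmetric (so $x_i\cdot A^{i,i}x_i = 0$ and $x^*_i\cdot A^{i,i}x_i = -x_i\cdot A^{i,i}x^*_i$), rather than the off-diagonal antisymmetry $A^{i,j} = -(A^{j,i})^T$ used in the loop-free case. I would isolate this as a short lemma: for a zero-sum polymatrix game with antisymmetric self-loops, $\sum_i u_i(x_i; x^*_{-i}) + \sum_i u_i(x^*_i; x_{-i}) = 0$ for all $x, x^*$. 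Everything else — the $\dot H$ computation, compactness of KL sublevel sets, and the appeal to Poincar\'e recurrence — is routine and parallels \cite{piliouras2014optimization}.
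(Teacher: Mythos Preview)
Your approach is correct and essentially identical to the paper's: volume preservation from Theorem~\ref{volcons-general}, conservation of $\sum_i \kl(x_i^*\|x_i)$ (the paper's Lemma~\ref{klinv}), boundedness of interior orbits away from $\partial\mathcal X$ (Lemma~\ref{bounded-away}), then Poincar\'e recurrence in $\mathcal C$ transported back via $\f^{-1}$. One caution: with self-loops the notation $u_i(x^*_i;x_{-i})$ is ambiguous --- the self-loop term appearing in $\sum_\alpha x^*_{i,\alpha}u_{i,\alpha}(x)$ is $x^*_i\cdot A^{i,i}x_i$, not $x^*_i\cdot A^{i,i}x^*_i$ --- so the paper avoids this by working directly with the bilinear form $\sum_{i,j} x^*_i\cdot A^{i,j}x_j$, flipping it via the global antisymmetry of $A$, and then invoking the interior-Nash equalization $u_{i,\alpha}(x^*)=u_i(x^*)$; this is exactly the bookkeeping you anticipate in your ``main obstacle'' paragraph.
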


\noindent
The proof relies on two key lemmas proven in appendix
(see \ref{lem-klinv}, \ref{lem-bounded-away}).

\begin{lemma}
  \label{klinv}
  Under the same assumptions and given $x^*$ an interior Nash equilibrium, 
  the sum of Kullback-Leibler divergences $\sum_{i = 1}^N \kl(x_i^*\| x_i)$
  is a constant of motion.
\end{lemma}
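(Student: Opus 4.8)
The plan is to differentiate the candidate constant of motion $W(x) = \sum_{i=1}^N \kl(x_i^* \| x_i) = \sum_{i=1}^N \sum_{\alpha \in \mathcal A_i} x_{i,\alpha}^* \ln \frac{x_{i,\alpha}^*}{x_{i,\alpha}}$ along trajectories of the replicator dynamics (\ref{rd}) and show $\dot W \equiv 0$ using the zero-sum (antisymmetry) structure of the game together with the Nash equilibrium property of $x^*$. First I would compute $\frac{d}{dt}\kl(x_i^*\|x_i) = -\sum_\alpha x_{i,\alpha}^* \frac{\dot x_{i,\alpha}}{x_{i,\alpha}}$, and substitute $\dot x_{i,\alpha}/x_{i,\alpha} = u_{i,\alpha}(x) - u_i(x)$ from (\ref{rd}); since $\sum_\alpha x_{i,\alpha}^* = 1$, this collapses to $\frac{d}{dt}\kl(x_i^*\|x_i) = u_i(x) - \sum_\alpha x_{i,\alpha}^* u_{i,\alpha}(x) = u_i(x) - u_i(x_i^*; x_{-i})$. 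Summing over $i$,
\begin{equation}
  \dot W = \sum_{i=1}^N \bigl( u_i(x) - u_i(x_i^*; x_{-i}) \bigr).
\end{equation}

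The next step is to expand both terms using the polymatrix-with-self-loops payoff structure. Writing the payoff of agent $i$ as $u_i(x) = \sum_{j : (i,j)\in E} x_i \cdot A^{i,j} x_j$ (including the self-loop term $j=i$ when present), we get $\sum_i u_i(x) = \sum_{(i,j)\in E} x_i \cdot A^{i,j} x_j$, and similarly $\sum_i u_i(x_i^*;x_{-i}) = \sum_{(i,j)\in E, j\ne i} x_i^* \cdot A^{i,j} x_j + \sum_i x_i^* \cdot A^{i,i} x_i$ — care is needed on self-loop edges, where replacing $x_i$ by $x_i^*$ must be done on \emph{both} arguments of the quadratic form, or rather one should re-derive the self-loop contribution to $u_i(x_i^*;x_{-i})$ directly as $x_i^* \cdot A^{i,i} x_i$. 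The zero-sum hypothesis says $A$ is antisymmetric: on off-diagonal edges $A^{i,j} = -(A^{j,i})^T$, and each self-loop matrix $A^{i,i}$ is antisymmetric. For the antisymmetric self-loop matrices, $z \cdot A^{i,i} z = 0$ for every $z$, so both $x_i \cdot A^{i,i} x_i$ and $x_i^* \cdot A^{i,i} x_i^*$ vanish; the cross term $x_i^* \cdot A^{i,i} x_i$ needs to be handled by pairing it against its transpose. I would then collect the off-diagonal edge contributions in pairs $(i,j),(j,i)$ and use $A^{i,j} = -(A^{j,i})^T$ to telescope the difference $\sum_i\bigl(u_i(x) - u_i(x_i^*;x_{-i})\bigr)$ into a fully symmetric quadratic/bilinear expression in $x$ and $x^*$.

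The cleanest route, which I would actually follow, avoids edge-by-edge bookkeeping: define the global bilinear form $B(x,x') = \sum_i u_i(x_i'; x_{-i})$ coming from the block matrix $A$, so that antisymmetry of $A$ gives $B(x,x') + B(x',x) = $ (a term that is zero by antisymmetry, up to the standard constant-sum normalization), hence $B(x,x) = 0$ and $B(x,x') = -B(x',x)$ for all $x,x'$. Then $\sum_i u_i(x) = B(x,x) = 0$, and $\sum_i u_i(x_i^*;x_{-i}) = B(x, x^*)$. So $\dot W = -B(x,x^*) = B(x^*,x) = \sum_i u_i(x^*_i; x_{-i}) $ — wait, $B(x^*,x) = \sum_i u_i(x_i; x^*_{-i})$; the point is that $B(x^*,x^*) = 0$ and, by the Nash condition applied at $x^*$, deviations cannot help, which forces the relevant linear functional to be constant on the simplex. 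Concretely, the interior Nash condition $u_{i,\alpha}(x^*) = u_i(x^*)$ for all $\alpha$ (equality because the equilibrium is fully mixed) means $x_i \mapsto u_i(x_i; x^*_{-i})$ is constant in $x_i$, so $B(x^*, x) = B(x^*,x^*) = 0$ for every $x$, giving $\dot W = 0$. The main obstacle, and the step deserving the most care in the writeup, is correctly accounting for the self-loop edges in the definition of $B$ so that antisymmetry of each $A^{i,i}$ is actually used and no double-counting of the diagonal blocks occurs; once the bilinear form $B$ is set up so that $B(x,x)=0$ identically, the rest is immediate from the Nash inequalities collapsing to equalities in the interior.
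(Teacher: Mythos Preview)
Your argument is correct and coincides with the paper's proof: both compute $\dot W = \sum_i u_i(x) - \sum_i \sum_\alpha x^*_{i,\alpha} u_{i,\alpha}(x)$, kill the first sum by antisymmetry of the full block matrix $A$, swap the second sum via the same antisymmetry into $\sum_i \sum_\alpha x_{i,\alpha} u_{i,\alpha}(x^*)$, and finish with the interior-Nash equalities $u_{i,\alpha}(x^*)=u_i(x^*)$ together with $\sum_i u_i(x^*)=0$. Your bilinear-form packaging $B(x,x')=\sum_i\sum_\alpha x'_{i,\alpha}u_{i,\alpha}(x)$ with $B(x,x')=-B(x',x)$ is exactly this swap written more abstractly; just be careful that with self-loops the quantity you call $u_i(x_i^*;x_{-i})$ really denotes $\sum_\alpha x^*_{i,\alpha}u_{i,\alpha}(x)$ (self-loop opponent still $x_i$), not $x_i^*\cdot A^{i,i}x_i^*+\cdots$, as you yourself flag.
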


\begin{lemma}
  \label{bounded-away}
  Under the same assumption and given $x^*$ an interior Nash equilibrium, the
  sum, for any interior point $x \in \interior \; \mathcal X$, its orbit $\gamma 
  = \{\phi(x, t) \text{ : } t \ge 0\}$ is bounded away from the boundary.
\end{lemma}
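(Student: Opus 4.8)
The plan is to convert the invariance in Lemma~\ref{klinv} into a quantitative confinement statement. First I would fix an interior point $x \in \interior\;\mathcal X$ and set $C := \sum_{i=1}^N \kl(x_i^*\|x_i)$; since both $x$ and $x^*$ are interior, every ratio $x^*_{i,\alpha}/x_{i,\alpha}$ is a finite positive number, so $C < +\infty$. By Lemma~\ref{klinv}, $\sum_i \kl(x_i^*\|x_i(t))$ is invariant along the trajectory, so every point $\phi(x,t)$ of the orbit $\gamma$ stays in the sublevel set
\begin{equation*}
  K_C \;:=\; \Big\{\, z \in \mathcal X \ :\ \textstyle\sum_{i=1}^N \kl(x_i^*\|z_i) \le C \,\Big\}.
\end{equation*}
It then suffices to show that $K_C$ is bounded away from $\partial\mathcal X$.

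The key step is to argue that $K_C$ is a compact subset of the \emph{open} set $\interior\;\mathcal X$. For compactness, I would note that $z \mapsto \sum_i \kl(x_i^*\|z_i)$, extended to take the value $+\infty$ as soon as some coordinate $z_{i,\alpha}$ vanishes, is lower semicontinuous on the compact set $\mathcal X$ — it is a finite nonnegative combination of the lower semicontinuous maps $z_{i,\alpha}\mapsto -\ln z_{i,\alpha}$, up to the additive constant $\sum_{i,\alpha} x^*_{i,\alpha}\ln x^*_{i,\alpha}$ — so $K_C$ is closed, and a closed subset of a compact set is compact. For the inclusion $K_C \subseteq \interior\;\mathcal X$: if $z \in \partial\mathcal X$ then some coordinate $z_{i,\alpha}$ equals $0$, and since $x^*$ is interior $x^*_{i,\alpha} > 0$, so that KL term, hence the whole sum, is $+\infty > C$ and $z \notin K_C$. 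A compact set contained in an open set sits at strictly positive distance from the complement, so $\delta := \dist(K_C,\partial\mathcal X) > 0$; since $\gamma \subseteq K_C$ we conclude $\dist(\phi(x,t),\partial\mathcal X) \ge \delta$ for all $t \ge 0$, which is exactly the claim.

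If one wants to sidestep semicontinuity, the same fact follows by a direct contradiction: supposing a sequence $z^{(n)} \in \gamma$ has $z^{(n)}_{i,\alpha} \to 0$ for some $i,\alpha$, bound the spectator summands of $\kl(x_i^*\|z^{(n)}_i)$ from below via $z^{(n)}_{i,\beta} \le 1$ (so $-\ln z^{(n)}_{i,\beta} \ge 0$) and drop the nonnegative contributions of the other players to get $C \ge x^*_{i,\alpha}\ln\!\big(x^*_{i,\alpha}/z^{(n)}_{i,\alpha}\big) + \sum_{\beta\ne\alpha} x^*_{i,\beta}\ln x^*_{i,\beta} \to +\infty$, a contradiction. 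I expect essentially no real obstacle here: the only point requiring a little care — in either version — is that individual summands of a KL divergence can be negative, so the spectator coordinates must be kept (with the bound $-\ln z_{i,\beta}\ge 0$) rather than discarded. The argument is structurally the confinement step of \cite{piliouras2014optimization}; all the genuine work for the self-loop generalization is already absorbed into Lemma~\ref{klinv} (that $\kl$ stays a constant of motion despite the diagonal games).
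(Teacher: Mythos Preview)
Your proposal is correct. The paper's own proof is precisely your second variant, carried out as an explicit bound rather than a contradiction: from $\kl(x_i^*\|y_i)\le C$ and $-\ln y_{i,\beta}\ge 0$ it isolates the $\alpha$-coordinate to obtain $y_{i,\alpha}\ge \exp\!\big(-(C+h_2(x_i^*))/x_{i,\alpha}^*\big)$, then takes $\delta$ to be the minimum of these over $i,\alpha$. Your topological route via lower semicontinuity and compactness of the sublevel set $K_C$ is a clean alternative packaging of the same idea; it trades the explicit constant for a shorter argument, but both hinge on exactly the two ingredients you identify (the invariance from Lemma~\ref{klinv} and the blow-up of $\kl(x^*\|\,\cdot\,)$ at the boundary because $x^*$ is interior).
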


\begin{proof}[Theorem~\ref{general-thm}]
Then the theorem follows directly from Poincar\'e recurrence theorem. The volume
is preserved in cumulative payoffs space, while every orbit stays bounded. 
Hence, the system is Poincar\'e recurrent in cumulative payoffs space; and
this property is transported to strategy space via the diffeomorphism
$\f^{-1}$.
\qed\end{proof}

\begin{remark}
To show Theorem~\ref{general-thm}, 
we used the fact that $\kl(x^* \| -)$ is a constant of 
motion. This property does not hold in general, but this is not the important
point; what is cirtical is that orbits \emph{remain bounded}.
The conservation of $\kl$ is no more than a tool to show this very property.
\end{remark}

\subsection{Poincar\'e recurrence and evolutionary game theory}

Given any polymatrix game, either there exists an interior Nash equilibrium,
or  no interior point is an equilibrium. The first case has been
dealt with. As far as the second case is concerned, 
previous work \cite{mertikopoulos2017cycles} 
have shown that in the 2-players case, the absence
of interior Nash equilibria enforces orbits to collapse to boundary. We show
that this is also true for 1-player zero-sum games (i.e., for evolutionary
game theory). Although not using the language of information theory, the results 
about the existence of strict Lyapunov functions and collapse to the boundary were first developed in 
\cite{Akin84}. Here we provide arguments to reduce this case to the more well studied two agent zero-sum games.
In combination with our Poincar\'e recurrence results, this will result in a complete picture of all possible limit behaviors of
 the system.

\begin{lemma}[\cite{Akin84}]
  \label{partial-support}
  Let $A$ be the matrix of a 1-player zero-sum game with no interior Nash
  equilibrium. Let $x^*$ be a Nash equilibrium of maximal support. 
  Then for any $x \in \mathcal X$ in the interior, 
  $\frac d{dt} \kl(x^* \| x) < 0$.
\end{lemma}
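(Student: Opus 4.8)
The plan is to show that $\kl(x^*\|\cdot)$ is a strict Lyapunov function for replicator dynamics on the interior of $\mathcal X$, by evaluating its derivative along the flow. First I would differentiate: for an interior orbit of~(\ref{rd}), writing $u_\alpha(x)=(Ax)_\alpha$ and $u(x)=x\cdot Ax$,
\[
  \frac{d}{dt}\kl(x^*\|x)\;=\;-\sum_\alpha x^*_\alpha\frac{\dot x_\alpha}{x_\alpha}
  \;=\;-\sum_\alpha x^*_\alpha\big(u_\alpha(x)-u(x)\big)
  \;=\;-\,x^*\cdot Ax\;+\;x\cdot Ax .
\]
Since $A$ is antisymmetric, $x\cdot Ax=0$ and $x^*\cdot Ax=-\,x\cdot Ax^*$, hence
\[
  \frac{d}{dt}\kl(x^*\|x)\;=\;\sum_\alpha x_\alpha\,(Ax^*)_\alpha .
\]
The value of an antisymmetric game is $0$, so the equilibrium condition gives $(Ax^*)_\alpha\le x^*\cdot Ax^*=0$ for every action $\alpha$, and since $\sum_\alpha x^*_\alpha(Ax^*)_\alpha=0$ with all summands nonpositive, $(Ax^*)_\alpha=0$ for $\alpha\in\supp(x^*)$. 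Therefore $\frac{d}{dt}\kl(x^*\|x)=\sum_{\alpha\notin\supp(x^*)}x_\alpha(Ax^*)_\alpha\le 0$, a sum of nonpositive terms because $x$ is interior. Two things remain: that $\supp(x^*)\neq\mathcal A$, and that at least one $\beta\notin\supp(x^*)$ satisfies $(Ax^*)_\beta<0$.

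The first is immediate: if $\supp(x^*)=\mathcal A$ then $x^*$ is an interior Nash equilibrium, contradicting the hypothesis. The second is the crux, and it is where the reduction to two-player zero-sum games enters: a plain maximal-support equilibrium does \emph{not} suffice, since one may have $Ax^*=0$, in which case the derivative vanishes identically. The right object is a \emph{quasi-strict} equilibrium, i.e.\ one with $(Ax^*)_\alpha<0$ for all $\alpha\notin\supp(x^*)$; equivalently, a relative interior point of the Nash polytope $N=\{z\in\mathcal X : Az\le 0\}$. To produce it I would read $\Gamma$ as the symmetric two-player zero-sum game with payoff matrix $A$: its value is $0$, and the set of optimal (maximin) strategies of either player is exactly $N$. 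By strong LP duality there is a strictly complementary pair of optimal strategies (Goldman--Tucker), which yields an optimal $q^*$ with $(Aq^*)_\alpha=0$ precisely on the essential support $S:=\bigcup_{z\in N}\supp(z)$ and $(Aq^*)_\alpha<0$ off it; this $q^*$ lies in $N$ (so it is a Nash equilibrium of $\Gamma$), has support exactly $S$ (hence maximal), and is the desired $x^*$.

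Combining these, for any interior $x$ the index set $\{\alpha\notin\supp(x^*)\}$ is non-empty and each corresponding term $x_\alpha(Ax^*)_\alpha$ is strictly negative, so $\frac{d}{dt}\kl(x^*\|x)<0$, as claimed. The main obstacle is the single genuinely game-theoretic step: the existence of a quasi-strict maximal-support equilibrium. I do not see a way around invoking the two-player zero-sum theory here (strong duality / strict complementarity, or equivalently Akin's original Lyapunov-function construction, which packages the same duality content); everything else is the short differentiation at the start.
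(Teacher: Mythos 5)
Your proposal is correct, and it takes a genuinely different route from the paper's. The paper does not differentiate anything directly here: it embeds the one-player game into the symmetric two-player polymatrix game with matrices $A$ and $-A^T$, proves that the absence of an interior symmetric equilibrium rules out \emph{any} interior equilibrium of the two-player game, and then invokes the two-player strict-decrease result of \cite{mertikopoulos2017cycles} restricted to the (invariant) diagonal. You instead compute $\frac{d}{dt}\kl(x^*\|x)=\sum_\alpha x_\alpha (Ax^*)_\alpha$ directly, observe it is a sum of nonpositive terms supported off $\supp(x^*)$, and supply the missing strictness via Goldman--Tucker strict complementarity. Your route is more self-contained, and it exposes a real imprecision in the statement as written: maximal support alone does \emph{not} suffice. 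For instance, with
\[
A=\begin{pmatrix}0&0&1\\0&0&-1\\-1&1&0\end{pmatrix},
\]
the equilibrium set is the segment $\{(z_1,1-z_1,0):z_1\in[\tfrac12,1]\}$, there is no interior Nash, the maximal support is $\{1,2\}$, and yet $x^*=(\tfrac12,\tfrac12,0)$ satisfies $Ax^*=0$, so $\frac{d}{dt}\kl(x^*\|x)\equiv 0$. Your quasi-strict equilibrium $q^*$ (equivalently, any relative-interior point of the Nash polytope, here $z_1\in(\tfrac12,1)$) repairs this, and since every maximal-support equilibrium has the same support, the existential reading of the lemma is all that Theorem~\ref{collapses} actually uses downstream. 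In short: the paper's reduction buys reuse of existing two-player and network machinery, while your argument buys transparency and isolates the single genuinely game-theoretic ingredient that both proofs ultimately rest on, namely LP duality / strict complementarity.
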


Our argument will use the 2-players result by giving a 2-players equivalent 
formulation of the 1-player game. Let $\Gamma^1$ be a 1-player zero-sum game
with $A$ the corresponding antisymmetric matrix. We claim
that $\Gamma^2$, the 2-player polymatrix game with matrices $A^{1, 2} = A$ and 
$A^{2, 1} = - A^T$ is equivalent to $\Gamma^1$ in the following way. There is
a canonic bijection between Nash equilibria of $\Gamma^1$ and \emph{symmetric}
Nash equilibria of $\Gamma^2$, that is, $x^*$ is an equilibrium of $\Gamma^1$
if, and only if $(x^*, x^*)$ is an equilibrium of $\Gamma^2$.

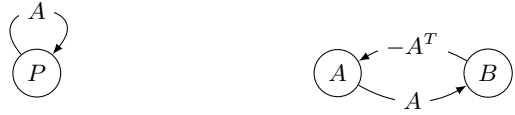
\begin{figure}[H]
  \centering
  \begin{tikzpicture}
    \node[draw, circle] (P) at (-3, 0) {$P$};
    \node[draw, circle] (A) at (1, 0) {$A$};
    \node[draw, circle] (B) at (3, 0) {$B$};
    
    \draw[->, >=latex] (A) to[bend right]
    node[pos=.5, fill=white] {\color{black} $A$} (B);
    \draw[->, >=latex] (B) to[bend right]
    node[midway, fill=white] {\color{black} $-A^T$} (A);
    \draw[->, >=latex] (P) to[min distance=10mm, in=90-40, out=90+40] 
    node[midway, fill=white] {\color{black} $A$} (P);
  \end{tikzpicture}
  \caption{The equivalent 2-players formulation}
\end{figure}

Moreover, it is easy to show that the diagonal 
$D = \{(x, x) \text{ : } x \in \Gamma^1(\mathcal X)\}$ is a stable space of 
$\Gamma^2$ under replicator dynamics, and that its canonic projection 
gives back exactly $\Gamma^1$. Now, if there is no interior equilibrium for
$\Gamma^1$, there cannot be interior symmetric equilibria for $\Gamma^2$. The
key point will be to show that there cannot be interior equilibria \emph{at all} 
for $\Gamma^2$.

\begin{lemma}
  Assume $\Gamma^1$ has no interior Nash. Then, $\Gamma^2$ has no interior Nash.
\end{lemma}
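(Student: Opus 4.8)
The plan is to argue by contradiction using the relationship between equilibria of $\Gamma^2$ and the game $\Gamma^1$. Suppose $\Gamma^2$ has an interior Nash equilibrium $(x,y)$, with $x$ in the interior of $\Gamma^1(\mathcal X)$ for player $A$ and $y$ in the interior for player $B$. Recall that the payoffs in $\Gamma^2$ are $u_A(x;y) = x \cdot A y$ (from the $A^{1,2}=A$ edge) plus a self-loop contribution $x \cdot A x$ (from the self-loop labelled $A$), and $u_B(y;x) = y \cdot (-A^T) x = -(Ax)\cdot y$. The first thing I would do is write out the equilibrium conditions carefully for each player. For player $B$, whose payoff is the bilinear antisymmetric form $-(Ax)\cdot y$ with no self-loop, the best-response condition says $y$ is supported on the maximizers of $\beta \mapsto -(Ax)_\beta$, i.e. $(Ax)_\beta$ is \emph{minimal} on $\supp(y)$; since $y$ is interior, this forces $(Ax)_\beta$ to be constant across all $\beta$, say $(Ax)_\beta = c$ for all $\beta$. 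But then $x$ is an interior point satisfying $u_{1,\beta}(x) = (Ax)_\beta = c$ for every action $\beta$ — which is exactly the statement that $x$ is an interior Nash equilibrium of $\Gamma^1$ (every pure deviation yields the same payoff $c$, and $u_1(x) = x\cdot Ax = c$ as well by antisymmetry, so in fact $c=0$). This contradicts the hypothesis that $\Gamma^1$ has no interior Nash.

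More explicitly, the key step is: \emph{an interior Nash of $\Gamma^2$ forces the $B$-player marginal to equalize the vector $Ax$, and equalization of $Ax$ over the interior is precisely the interior-Nash condition for $\Gamma^1$.} I would then note that we did not even need to use player $A$'s equilibrium condition — only player $B$'s — which is what makes the argument work despite the self-loop on player $A$ (the self-loop affects $A$'s condition but not $B$'s). This is the cleanest route; the main obstacle, and the point to be careful about, is the bookkeeping of which edges contribute to which player's payoff in $\Gamma^2$ and getting the sign conventions right in the antisymmetric matrix $-A^T$, so that "maximize $-(Ax)\cdot y$" really does translate into "$Ax$ is constant on the interior support."

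An alternative, more symmetric-looking approach would be to invoke the 2-player zero-sum theory directly: $\Gamma^2$ (forgetting the self-loop) is a standard bimatrix zero-sum game, and a fully mixed Nash $(x,y)$ of such a game has the property that $x$ is also an interior Nash of the antisymmetric one-shot game and similarly for $y$; but the self-loop on player $A$ perturbs $A$'s payoff function, so one must check it leaves $B$'s indifference condition — and hence the conclusion about $x$ — untouched. I would prefer the direct computation above since it sidesteps having to quote the bimatrix structure and handles the self-loop transparently. In either case, once the contradiction with Lemma~\ref{partial-support}'s hypothesis (no interior Nash for $\Gamma^1$) is reached, the lemma is proved.
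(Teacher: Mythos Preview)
Your core argument is correct, but you have misread the construction of $\Gamma^2$. In the paper, $\Gamma^2$ is the \emph{loop-free} two-player zero-sum game with $A^{1,2}=A$ and $A^{2,1}=-A^T$; the self-loop in the figure sits on the single node $P$ representing $\Gamma^1$, not on player~$A$ of $\Gamma^2$. So there is no term $x\cdot Ax$ in $u_A$, and your worry about ``handling the self-loop transparently'' is moot. (There is also a sign slip: using $A^T=-A$ one gets $u_{B,\beta}(x)=(Ax)_\beta$, not $-(Ax)_\beta$.) Fortunately neither issue damages the argument, since you deliberately use only player~$B$'s interior indifference condition, and that condition --- $(Ax)_\beta$ constant in $\beta$ --- is exactly the interior-Nash condition for $\Gamma^1$, regardless of sign or of what happens on player~$A$'s side.

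Your route is genuinely different from the paper's. The paper argues via the minimax theorem: in a two-player zero-sum game equilibrium strategies coincide with maxmin strategies; since $A$ is antisymmetric, $-A^T=A$, so both players face the same payoff matrix and hence share the same maxmin set; therefore $(x^*,x^*)$ is also an interior Nash of $\Gamma^2$, and the canonical bijection sends it to an interior Nash of $\Gamma^1$. Your approach bypasses minimax entirely and uses only the elementary full-support indifference principle, extracting the conclusion directly from one player's equilibrium condition. This is shorter and more self-contained; the paper's version, on the other hand, makes explicit the symmetric structure $(x^*,x^*)$ that underlies the ``canonical bijection'' already set up in the surrounding text.
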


\begin{proof}
  We prove it by contradiction. Assume there exists an interior Nash equilibrium
  in $\Gamma^2$, say $(x^*, y^*)$. It is well known that for each agent the set of equilibrium strategies 
  coincides with their maxmin strategies which only depend on the agent's own payoff matrix. 
  Since both agents share the same payoff matrix, $x^*, y^*$ are maxmin strategies for both agents and 
  hence $(x^*, x^*)$  is also an (interior) Nash of $\Gamma^2$. But this symmetric state is immediately a
  Nash equilibrium for $\Gamma^1$ as well and we have reached a contradiction.
\qed\end{proof}

\begin{proof}[Proposition~\ref{partial-support}]
  If the 1-player game of antisymmetric $A$ has no interior Nash, then
  its 2-player equivalent game has no interior Nash either. To avoid ambiguities,
  write $\phi_1$ the flow of (\ref{rd}) of the 1-player game and 
  $\phi_2$ the flow of (\ref{rd}) of the 2-player one. Writing $x^*$ 
  a Nash equilibrium of maximal support of the 1-player version, for any 
  strategy $x$ of full-support, previous results \cite{mertikopoulos2017cycles} 
  guarantee
  \begin{equation}
     \frac d{dt} \bigg [\kl(x^* \| \phi_2((x, x), 0)_1) + \kl(x^* \| 
    \phi_2((x, x), 0)_2) \bigg] < 0
  \end{equation}
  under the replicator dynamics on the 2-player equivalent game. As on the
  diagonal, this dynamic is exactly the one the 1-player game, we conclude
  \begin{equation}
    \frac d{dt} \bigg [ \kl(x^* \| \phi_1(x, 0)) \bigg ] < 0
  \end{equation}
\qed\end{proof}

\begin{theorem}[\cite{Akin84}]
  \label{collapses}
  Let be a 1-player zero-sum game with matrix $A$ and with no interior Nash
  equilibrium, on which we write $\phi$ the flow of (\ref{rd}). 
  Let $x^*$ be a Nash equilibrium. 
  Then for any interior point $x \in \mathcal X$, the orbit $\gamma
  = \{\phi(x,t) \text{ : } t \ge 0\}$ collapses to boundary. More precisely,
  for all $y \in \omega(x)$, $\supp \; y \subseteq \supp \; x^*$. 
\end{theorem}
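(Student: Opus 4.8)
The plan is to run $\kl(x^*\|\cdot)$ as a strict Lyapunov function and then identify $\omega(x)$ with the set where its time–derivative vanishes, which I will compute to be a face of $\mathcal X$. Throughout I take $x^*$ to be a Nash equilibrium of \emph{maximal} support, as in Proposition~\ref{partial-support}; set $S=\supp x^*$ and $V(z)=\kl(x^*\|z)$. This $V$ is nonnegative, of class $C^1$ on the relatively open set $\mathcal X_S=\{z\in\mathcal X:S\subseteq\supp z\}$, and lower semicontinuous on all of $\mathcal X$, tending to $+\infty$ precisely when some coordinate in $S$ goes to $0$. Finally, $\interior\mathcal X$ is forward invariant since replicator dynamics preserves supports, so for an interior initial condition $x$ the orbit $\gamma$ stays in $\interior\mathcal X$.

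By Proposition~\ref{partial-support}, $t\mapsto V(\phi(x,t))$ is strictly decreasing along $\gamma$, and being bounded below by $0$ it converges to some $L\ge 0$. For $y\in\omega(x)$, pick $t_n\to\infty$ with $\phi(x,t_n)\to y$: lower semicontinuity of $V$ gives $V(y)\le L<\infty$, hence $y\in\mathcal X_S$ and $S\subseteq\supp y$; continuity of $V$ on $\mathcal X_S$ upgrades this to $V(y)=L$. Thus $V\equiv L$ on $\omega(x)$. Since $\omega(x)$ is compact and flow–invariant, for each $y\in\omega(x)$ the whole orbit of $y$ lies in $\omega(x)$, where $V$ is constant, so $\frac{d}{dt}V(\phi(y,t))\equiv 0$. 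A short computation using $\sum_{\alpha\in S}x^*_\alpha=1$ and the antisymmetry of $A$ (which kills the $z^{\top}Az$ term) gives, for $z\in\mathcal X_S$,
\[
  \frac{d}{dt}\,V(\phi(z,t))\big|_{t=0}\;=\;-(x^*)^{\top} A z\;=\;\sum_{\alpha}(Ax^*)_{\alpha}\, z_{\alpha}.
\]
Because $x^*$ is a Nash equilibrium with value $(x^*)^{\top} A x^*=0$, we have $(Ax^*)_{\alpha}\le 0$ for all $\alpha$, and complementary slackness forces $(Ax^*)_{\alpha}=0$ for $\alpha\in S$. Evaluating the display at $z=y$ and using $\frac{d}{dt}V(\phi(y,t))\equiv 0$ yields $0=\sum_{\alpha\notin S}(Ax^*)_{\alpha}\,y_{\alpha}$, a sum of nonpositive terms; hence $y_{\alpha}=0$ whenever $(Ax^*)_{\alpha}<0$, i.e. $\supp y\subseteq\{\alpha:(Ax^*)_{\alpha}=0\}$.

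The remaining step — and the only one I expect to require real care — is to improve this to $\supp y\subseteq S$, which amounts to showing that for a maximal-support equilibrium the best-reply set $\{\alpha:(Ax^*)_{\alpha}=0\}$ equals $\supp x^*$ (one inclusion being the slackness above). This is where the hypothesis of no interior equilibrium enters, and it is a standard fact about matrix games: if $(Ax^*)_{\beta}=0$ for some $\beta\notin S$, an LP-duality/Farkas argument applied to the Nash polytope $P=\{p\in\Delta:Ap\le 0\}$ (whose maximal support is $S$) exhibits an equilibrium with $\beta$ in its support, contradicting maximality of $S$. The same conclusion is reachable through the two–player game $\Gamma^2$ with $A^{1,2}=A$, $A^{2,1}=-A^{\top}$ set up above — it has no interior Nash, its equilibrium set is $\mathrm{Nash}(\Gamma^1)^2$ with maximal-support element $(x^*,x^*)$, and the diagonal $D$ is forward invariant and conjugate to $\Gamma^1$ — so that the (support-precise form of the) two–player collapse used for Proposition~\ref{partial-support} transfers to $\Gamma^1$. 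Either way, $\supp y\subseteq S$ for every $y\in\omega(x)$; and since $S\ne\mathcal A$ in the absence of an interior equilibrium, this is precisely the assertion that every interior orbit collapses to the boundary.
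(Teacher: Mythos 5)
Your LaSalle-type argument is clean and, up to the last step, correct: the lower-semicontinuity trick keeping $\omega(x)$ inside $\mathcal X_S$, the constancy of $\kl(x^*\|\cdot)$ on the invariant compact set $\omega(x)$, and the computation $\frac{d}{dt}\kl(x^*\|z)=\sum_\alpha (Ax^*)_\alpha z_\alpha$ all check out, and they correctly yield $\supp y\subseteq\{\alpha:(Ax^*)_\alpha=0\}$. The gap is the final identification $\{\alpha:(Ax^*)_\alpha=0\}=\supp x^*$, which you claim holds for \emph{every} maximal-support equilibrium because $(Ax^*)_\beta=0$ with $\beta\notin S$ would ``exhibit an equilibrium with $\beta$ in its support.'' That implication is false. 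Take
\begin{equation*}
A=\begin{pmatrix}0&0&-1\\0&0&1\\1&-1&0\end{pmatrix},
\end{equation*}
whose equilibrium set is $P=\{p\in\mathcal X: Ap\le 0\}=\{(t,1-t,0):0\le t\le \tfrac12\}$: there is no interior equilibrium and the maximal support is $S=\{1,2\}$. The maximal-support equilibrium $x^*=(\tfrac12,\tfrac12,0)$ has $(Ax^*)_3=0$ although no equilibrium puts weight on action $3$; worse, for this $x^*$ one gets $\frac{d}{dt}\kl(x^*\|x)\equiv 0$, so your penultimate step only gives $\supp y\subseteq\{1,2,3\}$, i.e.\ nothing. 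The correct Gale--Kuhn--Tucker dichotomy says: for each $\beta$, either \emph{some} equilibrium uses $\beta$, or \emph{some} equilibrium $q$ satisfies $(Aq)_\beta<0$ (and not both); it does not constrain the sign of $(Ax^*)_\beta$ for your particular choice of $x^*$.

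The repair is to pick $x^*$ not merely of maximal support but in the relative interior of $P$ (equivalently, a quasi-strict equilibrium). Such a point has support $S=\bigcup_{p\in P}\supp p$, and since a linear functional nonpositive on $P$ and negative somewhere on $P$ is negative on all of its relative interior, the dichotomy gives $(Ax^*)_\beta<0$ for every $\beta\notin S$. With that choice your computation delivers $\supp y\subseteq S$ directly, and as a bonus it re-proves Proposition~\ref{partial-support} from scratch (note that the proposition needs the same refinement: the example above violates its strict inequality for a merely maximal-support $x^*$). Modulo this fix, your route genuinely differs from the paper's: the paper runs a two-orbit contradiction showing only that $\omega(x)$ avoids points whose support \emph{strictly contains} $S$, whereas you locate $\omega(x)$ inside the explicit zero set of $\frac{d}{dt}\kl(x^*\|\cdot)$, which is sharper and pinpoints the limiting face.
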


The proof follows from standard Lyapunov arguments. For completeness, we provide the proof in Appendix \ref{thm-collapses}.
This theorem shows that in the absence of any interior equilibrium, every interior
orbits collapses to the face spanned by $\supp(x^*)$ with a $x^*$ of maximal 
support. It tells nothing about the behavior of orbits when coming close to 
this face. Do we have convergence, or do we get (Poincar\'e) recurrence/cycles on the 
boundary? In general, both are possible, depending on the initial condition.

\begin{figure}[H]
  \centering
  \begin{subfigure}{.5\linewidth}
    \centering
    \includegraphics[width=\linewidth]{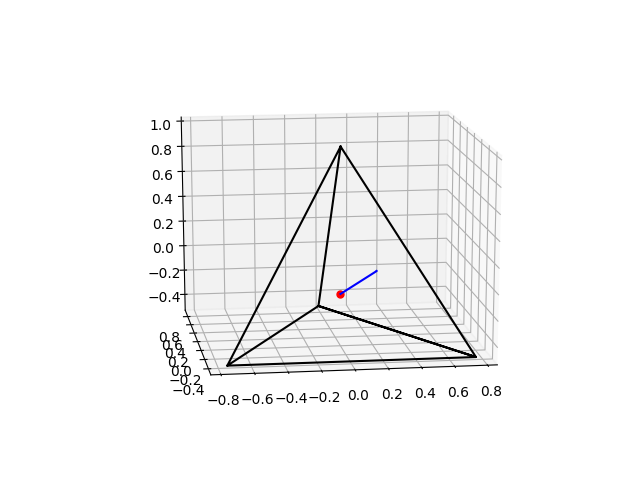}
  \end{subfigure}%
  \begin{subfigure}{.5\linewidth}
    \centering
    \includegraphics[width=\linewidth]{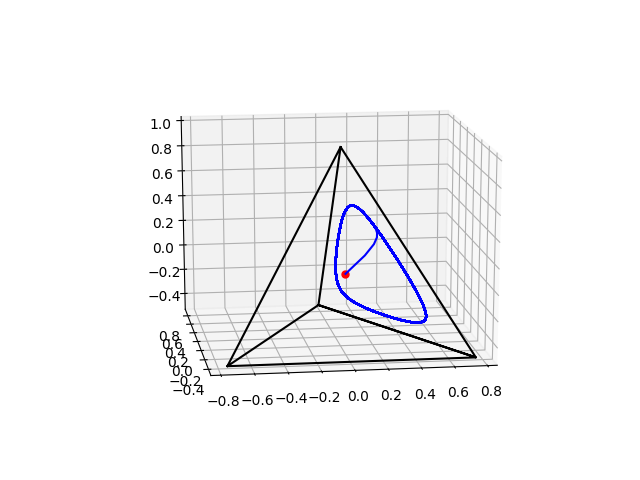}
  \end{subfigure}
  \caption{Converging and non-converging orbits in the same game.}
\end{figure}

Consider Rock-Paper-Scissor to which we add a dummy action, say Fork, which
scores -10 against any other action (excepted Fork itself). That is, consider
the 1-player zero-sum game with matrix
\begin{equation*}
  A = \begin{pmatrix} 0 & -1 & 1 & 10 \\
  1 & 0 & -1 & 10 \\
  -1 & 1 & 0 & 10 \\
  -10 & -10 & -10 & 0 \\ \end{pmatrix}
\end{equation*}
The Nash equilibrium is unique and $(\frac 13, \frac 13, \frac 13, 0)$. If one
starts at $(\frac 14, \frac 14, \frac 14, \frac 14)$, one converges to it. If
one starts at $(\frac 3{16}, \frac 5{16}, \frac 14, \frac 14)$, one collapses
to a periodic orbit on the boundary.

Combining the results we have so far, we can prove a fairly complete theorem
relating volume conservation, Poincar\'e recurrence and evolutionary game 
theory.

\begin{theorem}
  Let be a 1-player matrix game $A$ under the flow of \emph{replicator 
  dynamics}. The volume is preserved in cumulative payoffs space if, 
  and only if the game is equivalent to a zero-sum game; more precisely,
  if, and only if $A$ can be written as
  $A = B + (1 \; \cdots \; 1)^T (A_{1,1} \; \cdots \; A_{n, n})$ with
  $B$ an antisymmetric matrix.

  If that is the case, interior orbits exhibits Poincar\'e recurrent behavior
  if, and only if there exists an interior Nash equilibrium. If there is no
  interior equilibrium, every interior orbit collapses to the face
  spanned by the support of a Nash equilibrium of maximal support.
\end{theorem}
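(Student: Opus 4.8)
The plan is to assemble this theorem from the results of the previous sections, after one bookkeeping reduction. First I would record the \emph{equivalence} used throughout: if $A = B + (1\cdots1)^T(A_{1,1}\cdots A_{n,n})$ with $B$ antisymmetric, then for a $1$-player game $u_\alpha(x) = (Ax)_\alpha = (Bx)_\alpha + \sum_\gamma A_{\gamma,\gamma}x_\gamma$, so the extra column-constant term shifts every action's payoff by the \emph{same} amount $\sum_\gamma A_{\gamma,\gamma}x_\gamma$. Hence it changes neither the payoff differences $u_\alpha(x)-u_\beta(x)$, nor the replicator vector field (\ref{rd}), nor the set of Nash equilibria. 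Consequently the flow $\phi$, its $\omega$-limits, its volume behaviour and its equilibrium structure all coincide with those of the genuinely antisymmetric game $B$, which is exactly what lets us invoke the antisymmetric-matrix theorems verbatim.

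Second, the volume statement. The ``if'' direction is Theorem~\ref{volcons-general} specialized to $N=1$ (an antisymmetric diagonal game is volume conservative in cumulative payoff space), together with the reduction above, which carries this to any $A$ of the stated form. The ``only if'' direction is precisely Theorem~\ref{vol-zero-sum}: volume conservation on an open set forces $A_{\alpha,\alpha}-A_{\alpha,\beta} = -(A_{\beta,\beta}-A_{\beta,\alpha})$ for all $\alpha,\beta$, i.e. the decomposition $A = B + (1\cdots1)^T(A_{1,1}\cdots A_{n,n})$ with $B$ antisymmetric. Since volume preserved everywhere implies volume preserved on some open set, the two quantifications agree and the equivalence is complete.

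Third, assume henceforth that $A$ has this form, equivalently that the flow is volume preserving, and split on the existence of an interior Nash equilibrium. If one exists, Theorem~\ref{general-thm} with $N=1$ yields Poincar\'e recurrence directly. If none exists, let $x^*$ be a Nash equilibrium of maximal support; since no equilibrium is fully mixed, $\supp(x^*)\subsetneq\mathcal A$, so the face $F = \{x\in\mathcal X : \supp(x)\subseteq\supp(x^*)\}$ is a proper face of the simplex and is therefore disjoint from $\interior\,\mathcal X$. Theorem~\ref{collapses} states that every interior orbit has its entire $\omega$-limit set inside $F$ --- this is the last clause of the statement. It remains to see that this \emph{precludes} Poincar\'e recurrence: choose any nonempty open $U\subseteq\interior\,\mathcal X$ with $\overline U$ compact in $\interior\,\mathcal X$, so $\overline U\cap F=\varnothing$ and, both sets being compact, $\dist(F,\overline U)>0$. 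Since $\dist(\phi(x,t),\omega(x))\to 0$ and $\omega(x)\subseteq F$ for every $x\in U$, each such orbit leaves $U$ after finite time and never returns; so no orbit meeting $U$ meets it infinitely often, contradicting recurrence. Hence recurrence holds if and only if an interior equilibrium exists.

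The only real obstacle is the first step: keeping the ``equivalent to zero-sum'' reduction clean enough that Theorems~\ref{general-thm} and \ref{collapses}, stated for genuinely antisymmetric matrices, apply to the strategically-equivalent $A$. Once that is in place the argument is pure assembly of Theorems~\ref{volcons-general}, \ref{vol-zero-sum}, \ref{general-thm}, \ref{collapses}, plus the short topological remark that collapse to a proper face of $\mathcal X$ is incompatible with Poincar\'e recurrence.
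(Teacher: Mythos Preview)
Your proposal is correct and follows essentially the same assembly as the paper's own proof: Theorems~\ref{volcons-general} and \ref{vol-zero-sum} for the volume equivalence, Theorem~\ref{general-thm} for recurrence when an interior Nash exists, and Theorem~\ref{collapses} combined with a short topological argument to rule out recurrence otherwise. Your treatment is in fact slightly tidier in two places --- you make the column-constant reduction explicit (the paper just asserts ``one can assume without loss of generality that $A$ is antisymmetric''), and your contradiction for the no-interior-Nash case works with an arbitrary open $U$ whose closure avoids the boundary, whereas the paper picks the specific ball $B(\tfrac{1}{n}(1,\ldots,1),\epsilon)$ and phrases the contradiction via $\limsup\dist(\phi(x,t),\mathrm{bd}(\mathcal X))>0$ --- but the underlying logic is identical.
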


\begin{proof}
  Let $A$  be a single-agent matrix game under replicator dynamics. Assume
  the volume is conserved in cumulative payoff space. Then, by Theorem 
  \ref{vol-zero-sum}, $A$ is equivalent to a zero-sum game ($A = B + (1 \; \cdots \; 1)^T (A_{1,1} \; \cdots \; A_{n, n})$ where
  $B$ is an antisymmetric matrix.). Conversely,
  if $A$ is equivalent in the above sense to a zero-sum game, one can assume without loss of
  generality that $A$ is antisymmetric. Then, by Theorem~\ref{volcons-general},
  the volume is preserved at any point. This proves the first part of the theorem.

  Now, assume $A$ is antisymmetric. If there exists an interior Nash, by
  Theorem~\ref{general-thm}, the system is Poincar\'e recurrent. Conversely,
  if the system is Poincar\'e recurrent, there has to exist an interior Nash.
  Assume on the contrary that there is no such equilibrium. Let $x^*$ be a
  Nash equilibrium. Consider the open ball $U = B(\frac 1n (1 \ldots 1), \epsilon)$
  with $\epsilon > 0$ small. We know that there exists an orbit $\gamma$
  intersecting $U$ infinitely often. If $\epsilon$ is small enough, 
  by taking $x$ any point of $\gamma$, that means that 
  \begin{equation}
  \label{limsup-pos}
  \limsup [\dist(\phi(x, t), \textnormal{bd}(\mathcal X))]> 0
  \end{equation}
  But by Theorem
  \ref{collapses}, $\gamma$ should collapses to the boundary. This contradicts
  (\ref{limsup-pos}).
\qed\end{proof}


\section{Cycles in dimension 3}

In this section, we give a proof that the flow $\phi$ of replicator dynamics
is periodic for every interior initial condition of 1-player zero-sum games of dimension 3
with interior Nash equilibrium. The proof uses the 
 Poincar\'e-Bendixson
Theorem, that we recall here.

\vspace{1em}
\begin{theorem}[Poincar\'e-Bendixson]
  A limit set $\omega(x)$ of a $\mathcal C^1$ dynamical system over the plane,
  if non-empty and compact, that does not contain a rest point is a periodic
  orbit.
\end{theorem}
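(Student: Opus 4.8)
The plan is to prove the theorem by the classical transversal-section argument, whose only essential two-dimensional input is the Jordan curve theorem. Write $\xi$ for the $\mathcal C^1$ vector field generating the flow and fix $x$ with $\omega(x)$ non-empty, compact, and free of rest points. The first step is to set up \emph{local transversals}: since every point of $\omega(x)$ is a regular point of $\xi$, the flow-box (straightening) theorem lets me attach to each $p \in \omega(x)$ a short segment $\Sigma_p$ meeting $\xi$ transversally, crossed by all nearby orbits in a single fixed direction.

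Next I would establish the \emph{monotonicity lemma}, the heart of the argument: if a single orbit meets a transversal $\Sigma$ at consecutive times $t_0 < t_1 < t_2$ in points $p_0, p_1, p_2$, then $p_1$ lies strictly between $p_0$ and $p_2$ in the natural ordering of the segment. The proof forms a Jordan curve from the orbit arc joining $p_0$ to $p_1$ together with the subsegment of $\Sigma$ from $p_1$ back to $p_0$; because $\xi$ crosses $\Sigma$ in a fixed direction, the forward orbit after $t_1$ is trapped in one of the two complementary regions, which forces the claimed ordering. From this I deduce the key \emph{uniqueness lemma}: $\omega(x)$ meets any transversal $\Sigma$ in at most one point, since the monotone crossing sequence of the orbit of $x$ can accumulate at only one point of $\Sigma$.

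The assembly step then runs as follows. Choose $y \in \omega(x)$; limit sets are closed and flow-invariant, so the whole orbit of $y$, and hence $\omega(y)$, lies inside $\omega(x)$, and $\omega(y)$ is again non-empty and rest-point-free. Pick $z \in \omega(y)$ and a transversal $\Sigma$ through $z$. The orbit of $y$ accumulates at $z$ and therefore crosses $\Sigma$ infinitely often, but every such crossing lies in $\omega(x)$, which by the uniqueness lemma meets $\Sigma$ in a single point. Hence the orbit of $y$ returns to the same point of $\Sigma$, so it is a periodic orbit $\Gamma \subseteq \omega(x)$. Finally I would show $\omega(x) = \Gamma$: the set $\Gamma$ is closed in $\omega(x)$, and a transversal argument shows it is also relatively open (any point of $\omega(x)$ near $\Gamma$ would produce a second intersection with some transversal, contradicting uniqueness); since $\omega(x)$ is connected, this forces $\omega(x) = \Gamma$.

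I expect the monotonicity lemma to be the main obstacle, both because it is where planarity enters irreducibly through the Jordan curve theorem and because the bookkeeping of which complementary region traps the forward orbit must be done carefully; by contrast, the flow-box setup, the invariance and closedness of limit sets, and the connectedness of $\omega(x)$ are comparatively routine.
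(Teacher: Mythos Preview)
Your outline is the standard transversal/Jordan-curve argument and is essentially correct as a proof of Poincar\'e--Bendixson. However, there is nothing to compare against: the paper does not prove this theorem at all. It is merely \emph{recalled} as a classical black-box result (``that we recall here'') and then invoked inside the proof of Lemma~\ref{omega-cycle}. So your proposal is not a match for, nor a divergence from, the paper's approach---it simply supplies a proof where the paper provides none.

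If anything, your write-up is more than the paper requires. The only places the paper engages with Poincar\'e--Bendixson-type reasoning are downstream: checking that $\omega(x)$ contains no rest point (via the $\kl$ invariant) so that the theorem applies, and then in Theorem~\ref{periodic} using convexity of $\kl(x^\star\|\cdot)$ together with the Jordan curve $\omega(x)$ to pin down $x\in\omega(x)$. Those are the parts worth scrutinizing in this paper; the Poincar\'e--Bendixson theorem itself is taken as given.
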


In the following, we make the assumption that the game is a 1-player
zero-sum game of dimension 3 that has an interior Nash equilibrium $x^*$.

\begin{lemma}
  \label{klconst}
  Let be a 1-player zero-sum game with matrix $A$ with an interior Nash 
  equilibrium. Then, for any interior initial condition $x$, the Kullback-Leibler divergence 
  to any interior Nash equilibrium $x^*$
  is constant over the limit set $\omega(x)$. More precisely, for any 
  $y \in \omega(x)$, we have $\kl(x^* \| y) = \kl(x^*\|x)$.
\end{lemma}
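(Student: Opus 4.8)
The plan is to reduce the statement to two facts that are already in hand: that $\kl(x^*\|\cdot)$ is a constant of motion along the replicator flow (Lemma~\ref{klinv} specialized to $N=1$, i.e.\ the single self-loop game with antisymmetric matrix $A$), and that interior orbits stay uniformly away from the boundary when an interior Nash exists (Lemma~\ref{bounded-away}). Granting these, the lemma becomes a short limiting argument along a sequence of times converging to a given point of $\omega(x)$.

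Concretely, I would proceed as follows. Fix an interior initial condition $x$ and an interior Nash equilibrium $x^*$. By Lemma~\ref{klinv} the map $t \mapsto \kl(x^*\|\phi(x,t))$ is constant on $[0,+\infty)$, with common value $\kl(x^*\|x)$. Let $y \in \omega(x)$ and pick a sequence $t_n \to +\infty$ with $\phi(x,t_n) \to y$. By Lemma~\ref{bounded-away} there is a $\delta>0$ such that every point $\phi(x,t)$, and hence also the limit $y$, satisfies $z_\alpha \ge \delta$ for all actions $\alpha$. On the compact slab $K_\delta = \{z \in \mathcal X : z_\alpha \ge \delta \text{ for all } \alpha\}$ the function $z \mapsto \kl(x^*\|z) = \sum_\alpha x^*_\alpha \ln(x^*_\alpha / z_\alpha)$ is continuous (indeed smooth), so
\[
\kl(x^*\|y) = \lim_{n\to\infty}\kl(x^*\|\phi(x,t_n)) = \kl(x^*\|x).
\]
Since $y$ was an arbitrary point of $\omega(x)$, this shows $\kl(x^*\|\cdot)$ is constant on $\omega(x)$ and equal to $\kl(x^*\|x)$.

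The only genuinely load-bearing step — as opposed to soft topology — is the continuity of $\kl(x^*\|\cdot)$ at the limit point $y$: because $x^*$ is interior, $\kl(x^*\|z)$ blows up to $+\infty$ as soon as $z$ touches any face of $\mathcal X$, so without control on the closure of the orbit the limit could either fail to exist or fail to equal the constant value $\kl(x^*\|x)$. This is exactly what Lemma~\ref{bounded-away} rules out, by forcing $\omega(x) \subseteq \interior\,\mathcal X$ at a uniform positive distance from $\textnormal{bd}(\mathcal X)$. I should also note that nothing here is specific to dimension $3$; the restriction to $n=3$ only enters afterwards, when the Poincar\'e--Bendixson theorem is applied on top of this lemma.
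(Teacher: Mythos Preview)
Your proof is correct and follows essentially the same line as the paper: invoke Lemma~\ref{klinv} to get $\kl(x^*\|\cdot)$ constant along the orbit, then pass to the limit along a sequence $t_n\to\infty$ using continuity. The only difference is that you invoke Lemma~\ref{bounded-away} to secure continuity of $\kl(x^*\|\cdot)$ at $y$ on a compact interior slab, whereas the paper simply treats $\kl(x^*\|\cdot)$ as a continuous (extended-real-valued) function on all of the compact simplex $\mathcal X$ and appeals to uniform continuity there; both routes are fine, and yours is arguably the more careful of the two.
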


\begin{proof}
  $\kl$ is continuous defined on the compact set $\mathcal X
  = \Delta \R^n$, so is uniformly continuous. What is more, by compactness of 
  $\mathcal X$, $\lim_{t \rightarrow \infty}\dist(x(t), \omega(x)) = 0$.
  Therefore, since $\kl(x^* \| -)$ is a constant of motion by Lemma~\ref{klinv},
  we prove that for any $y \in \omega(x)$, $\kl(x^* \| y) = \lim 
  \kl(x^* \| x(t_n)) = \kl(x^* \| x)$.
\qed\end{proof}

\begin{lemma}
  \label{omega-cycle}
  Let be a 1-player zero-sum game of dimension 3 with matrix $A$.
  Assume there exists an interior Nash. Then, for any interior point $x$, 
  $\omega(x)$ is a periodic orbit.
\end{lemma}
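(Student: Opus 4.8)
\textbf{Proof plan for Lemma~\ref{omega-cycle}.}
The plan is to invoke the Poincar\'e--Bendixson theorem, so the first task is to arrange for the dynamics to live on a two-dimensional invariant manifold to which that theorem applies. For a $1$-player game of dimension $3$, the strategy simplex $\mathcal X = \Delta\R^3$ is already a $2$-dimensional object, and in cumulative payoffs space $\mathcal C = \R^{2}$ the flow $\psi$ is a genuine planar $\mathcal C^1$ flow; so after transporting everything through the diffeomorphism $\f$, Poincar\'e--Bendixson is applicable in $\mathcal C$. First I would fix an interior initial condition $x$, set $c = \kl(x^*\|x)$, and record three facts already available: (i) by Lemma~\ref{klinv}, $\kl(x^*\|\cdot)$ is a constant of motion, so the whole orbit of $x$ lies on the level set $L_c = \{z : \kl(x^*\|z) = c\}$; (ii) by Lemma~\ref{bounded-away}, this orbit is bounded away from $\textnormal{bd}(\mathcal X)$, hence $\omega(x)$ is a nonempty compact subset of $\interior\,\mathcal X$; (iii) by Lemma~\ref{klconst}, $\omega(x) \subseteq L_c$, and by invariance of $\omega$-limit sets it is flow-invariant.

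Next I would verify the remaining hypothesis of Poincar\'e--Bendixson: that $\omega(x)$ contains no rest point of (\ref{rd}). The only interior rest point candidates are interior Nash equilibria, and an interior Nash equilibrium $z^*$ of an antisymmetric game satisfies $\kl(z^*\| \cdot)$ constant; more to the point, if $y^* \in \omega(x)$ were a rest point it would be an interior Nash (since $\omega(x)$ avoids the boundary), and then $\kl(y^*\|y^*) = 0$ would have to equal the constant value of $\kl(y^*\|\cdot)$ along the orbit of $x$; but at the interior point $x$ we have $\kl(y^*\|x) > 0$ unless $x = y^*$, and $x$ is not a rest point by hypothesis. Hence $\omega(x)$ contains no rest point. (One should also note $x$ itself is assumed not to be a rest point; if it were, $\omega(x) = \{x\}$ is a degenerate periodic orbit anyway, so the statement still holds.) Applying Poincar\'e--Bendixson in $\mathcal C$ to the planar $\mathcal C^1$ flow $\psi$ then yields that the $\omega$-limit set of $\f(x)$ is a periodic orbit of $\psi$, and pulling back through $\f^{-1}$ — which is a diffeomorphism and therefore maps periodic orbits to periodic orbits and commutes with taking $\omega$-limits — shows $\omega(x)$ is a periodic orbit of $\phi$.

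The main obstacle is the rest-point exclusion step, and more subtly the need to make sure Poincar\'e--Bendixson is being applied on a space where it is literally a planar statement: the flow must be on (an open subset of) $\R^2$, not on the embedded simplex in $\R^3$. Working in $\mathcal C$ rather than in $\mathcal X$ handles this cleanly, which is why the diffeomorphism $\f$ is doing essential work here and not just cosmetic relabeling. A secondary point to be careful about is that Poincar\'e--Bendixson requires $\omega(x)$ compact and nonempty with the flow defined on a neighborhood of it; compactness and nonemptiness come from Lemma~\ref{bounded-away} together with compactness of $\mathcal X$, and the flow is defined and $\mathcal C^1$ on all of $\interior\,\mathcal X$, which contains $\omega(x)$ by the same lemma. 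Once these are in place the argument is short; I do not expect the level-set geometry (the shape of $L_c$) to require any analysis, since Poincar\'e--Bendixson does the work without it.
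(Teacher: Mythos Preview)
Your proposal is correct and follows essentially the same approach as the paper: reduce to a planar flow, use the $\kl$ invariant (Lemma~\ref{klinv}) together with Lemma~\ref{bounded-away} to exclude rest points from $\omega(x)$, then invoke Poincar\'e--Bendixson. The paper's version is terser --- it simply observes that $\kl(x^\star\|\phi(x,t))$ being a positive constant keeps the orbit away from every interior rest point --- whereas you spell out the rest-point exclusion via Lemma~\ref{klconst} and are more explicit about passing to $\mathcal C$ to make the flow literally planar, but these are differences of exposition rather than strategy.
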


\begin{proof}
 This system has clearly two degrees of freedom and is hence planar.
  Let $x^\star$
  be any interior Nash equilibrium. By Theorem~\ref{general-thm}, 
  $\kl(x^\star\| -)$ is a constant of motion. If the initial point is not an 
  equilibrium, the orbit is non-trivial and $\kl(x^\star \| \phi(x, -))$
  is constant and positive. Therefore, $\phi(x, -)$ is bounded away from any 
  rest point of the flow. It follows that $\omega(x)$ does not contain any rest 
  point and the statement follows by applying the Poincar\'e-Bendixson's theorem. 
\qed\end{proof}

\begin{figure}[H]
  \centering
  \includegraphics[width=\linewidth]{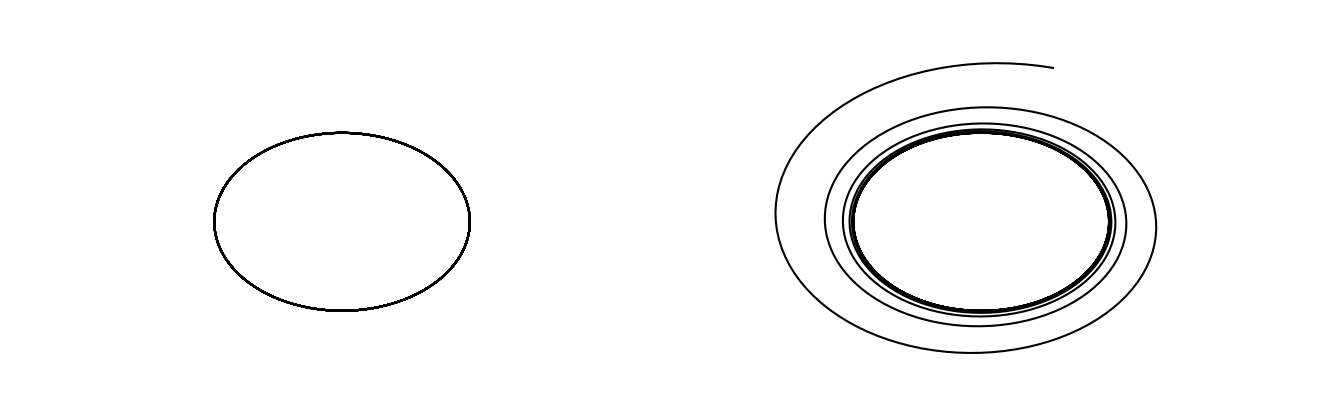}
  \caption{Up to now, we know that interior orbits should act like periodic 
  orbits in the limit. In general, it means that they show a spiral-like 
  behavior, converging to a Jordan's curve. The point is to show these spirals
  are precisely periodic orbits, i.e., that all these spirals are 
  ellipsoid-like.}
\end{figure}

\vspace{1em}
\begin{theorem}
  \label{periodic}
  Let be a 1-player zero-sum game of dimension 3 with matrix $A$.
  Assume there exists an interior Nash. Then,
  any interior point $x$ belongs to a periodic orbit.
\end{theorem}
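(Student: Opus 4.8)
The plan is to bootstrap Lemma~\ref{omega-cycle}. That lemma already gives that $\omega(x)$ is a periodic orbit, so the only thing left is to show that $x$ \emph{itself} sits on a periodic orbit, i.e. that the forward orbit of $x$ does not merely spiral toward $\omega(x)$ without ever reaching it. First I would fix an interior Nash equilibrium $x^*$ and dispose of the trivial case where $x$ is an interior Nash, since such a point is a rest point, hence a degenerate periodic orbit. For the remaining case set $c:=\kl(x^*\|x)>0$. By Lemma~\ref{klinv} (applied in the one-player setting, legitimate by Theorem~\ref{general-thm}), $\kl(x^*\|-)$ is a constant of motion, so the whole orbit of $x$ stays on the level set $L_c:=\{y\in\interior \mathcal X:\kl(x^*\|y)=c\}$, and, $L_c$ being closed, so does $\omega(x)$.

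The key geometric step I would establish is that $L_c$ is a Jordan curve. Writing $\kl(x^*\|y)=\sum_\alpha x^*_\alpha\ln x^*_\alpha-\sum_\alpha x^*_\alpha\ln y_\alpha$, the function $\kl(x^*\|-)$ is smooth and strictly convex on the two-dimensional open simplex $\interior \mathcal X$, has unique minimizer $x^*$ with minimum value $0$, and tends to $+\infty$ as $y$ approaches $\textnormal{bd}(\mathcal X)$, because every coordinate of $x^*$ is positive. Hence $\{y:\kl(x^*\|y)\le c\}$ is a compact convex subset of $\interior \mathcal X$, and since $c>0$ it has nonempty interior, so it is a genuine planar convex body whose topological boundary is exactly $L_c$; the boundary of a planar convex body is homeomorphic to $S^1$.

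Then I would conclude as follows. By Lemma~\ref{omega-cycle}, $\Gamma:=\omega(x)$ is a periodic orbit of a planar flow, hence the homeomorphic image of a circle, and it is contained in $L_c$. A closed subset of a circle that is itself homeomorphic to a circle must be the whole circle: if $\Gamma\subsetneq L_c$ then $L_c\setminus\Gamma$ is a nonempty open subset of $L_c\cong S^1$, so $\Gamma$ embeds into $S^1$ minus a nontrivial open arc, and thus into $[0,1]$; but $S^1$ does not embed into $[0,1]$ (deleting an interior point disconnects $[0,1]$, but deleting any single point never disconnects $S^1$), a contradiction. So $\Gamma=L_c$, whence $x\in L_c=\Gamma=\omega(x)$; since $\omega(x)$ is invariant and is a periodic orbit, the orbit through $x$ coincides with it, and $x$ lies on a periodic orbit.

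The main obstacle --- and the reason, as remarked after Lemma~\ref{omega-cycle}, that this does not follow from Poincar\'e recurrence plus Poincar\'e--Bendixson alone --- is precisely the passage from ``$\omega(x)$ is a cycle'' to ``$x$ is on a cycle''. Recurrence only secures $x\in\omega(x)$ for \emph{almost every} initial condition, whereas we need it for \emph{every} interior point, and a planar flow is a priori perfectly capable of spiralling onto a limit cycle. What rules this out here is the rigidity of the energy foliation: strict convexity of the $\kl$ potential pins each positive level set down to a single Jordan curve with empty planar interior, and an orbit confined to such a curve cannot accumulate on a proper closed sub-loop of it. In a full write-up the only genuinely technical points are (i) the boundary blow-up $\kl(x^*\|y)\to+\infty$, which makes the sublevel sets honest planar convex bodies, and (ii) the elementary point-set topology of closed sub-circles of a circle; both are routine.
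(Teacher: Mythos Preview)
Your proof is correct and reaches the same conclusion as the paper, but via a somewhat different route. The paper also starts from Lemma~\ref{omega-cycle} and exploits strict convexity of $\kl(x^\star\|-)$, but instead of globally identifying the level set $L_c$ as a Jordan curve and invoking the topological fact that $S^1$ has no proper sub-circle, it argues pointwise with rays: it takes as $x^\star$ the time-average equilibrium over the periodic orbit $\omega(x)$, shows that $x^\star$ lies in the Jordan interior of $\omega(x)$ (otherwise a ray from $x^\star$ into that interior would cross $\omega(x)$ twice at distinct $\kl$ values, contradicting Lemma~\ref{klconst}), and then fires a ray from $x^\star$ through $x$; strict monotonicity of $\kl$ along the ray forces the unique crossing point $x_\omega$ with $\omega(x)$ to coincide with $x$. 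Your approach is more structural---it proves in one stroke that each positive level set $L_c$ \emph{is} a periodic orbit---while the paper's ray argument is more hands-on and avoids both the boundary blow-up verification and the $S^1\hookrightarrow[0,1]$ lemma. Both arguments rest on the same two ingredients (constancy of $\kl$ along orbits and its strict convexity); the paper exploits convexity only along one-dimensional rays, whereas you exploit it to pin down the global shape of the sublevel set.
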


\begin{proof}
  The result is obvious for interior equilibria. Assume $x$ is not an
  equilibrium. 
  By Lemma~\ref{omega-cycle}, $\omega(x)$ is a periodic orbit. It means that geometrically, $\omega(x)$ is a Jordan's curve
  of the plane $\Delta : x_1 + x_2 + x_3 = 1$, so $\Delta$ is separated 
  into two connected components; an interior $A$ and an exterior $B$. 
    
  By assumption, we know that there is a rest point in the interior
  of $\mathcal X$. What is 
  more, we also know that the time average of the strategy over $\omega(x)$
  is an equilibrium\footnote{This follows immediately from the no-regret property of replicator \cite{mertikopoulos2017cycles}.}, that will lie in the convex hull, hence interior to 
  $\mathcal X$. Let us denote it 
  $x^\star$. We claim that this $x^\star$ has to be a point of $A$. Assume,
  on the contrary, that $x^\star \notin A$. Because $\omega(x)$ is non-trivial,
  $A$ is non-empty. Let $a \in A$. Draw a semi-infinite ray starting from
  $x^\star \in B$ in the direction of $a\in A$. Because $A$ is bounded, this ray
  will transit from $B$ to $A$ then $A$ to $B$ at least once. Hence, it 
  cross $\omega(x)$ at least two times, say first $\omega_1$ then $\omega_2$.
  But, $\kl(x^\star \| \cdot)$ is a strict convex function, globally minimal at 
  $x^\star$, so will stricly increase as one advance on the ray. Therefore,
  $\kl(x^\star \| \omega_1) < \kl(x^\star \| \omega_2)$, which contradicts
  Lemma~\ref{klconst}. Accordingly, $x^\star \in A$.

  \begin{figure}[H]
    \centering
    \begin{tikzpicture}
      \draw[line width=.75, fill=black!10, black!10] (0, 0) ellipse(2 and 1.5);
      \begin{scope}
      \clip (-2.1, 1.5) -- (0.5, 1.6) -- (-2.1, -1.5) -- cycle;
      \draw[line width=.75, fill=black!20] (0, 0) ellipse(2 and 1.5);
      \end{scope}
      \draw[fill=black!10, black!10] 
      (-1.6, -.9125) to[in=-160, out=90] (.4, 1.475) -- (-1.6, -.9125);
      \draw[line width=.75] (-1.6, -.9125) to[in=-160, out=90] (.4, 1.475);

      \draw (-.4, .3) -- (-2.6, 1.4);
      \draw[->] (-.4, .3) -- (-2.0, 1.1);
      
      \draw[fill=black] (-.4, .3) circle(0.025cm);
      \draw[fill=black] (-1.01, .605) circle(0.025cm);
      \draw[fill=black] (-1.6, .9) circle(0.025cm);
      \node (a) at (-.4+.25, .3-.25) {$x^\star$};
      \node (a) at (-1.01+.3, .605) {$\omega_1$};
      \node (a) at (-1.6-.1, .9+.25) {$\omega_2$};
      \node (a) at (-1.75, 0) {$A$};
      \node (a) at (-2.5, 0.25) {$B$};
      \node (a) at (-1, -.25) {$B$};

      \node (x_star) at (-0.33+6, 0.1) {$x^\star$};
      \draw[fill=black] (0+6, 0) circle(0.025cm);
      \draw[line width=.75] (0+6, 0) ellipse(2 and 1.5);
      \draw (0+6,0) to (3+6, 1.5);
      \draw[->] (0+6,0) to (2.5+6, 1.25);
      \node (l) at (2.4+6, 1) {$L$};

      \draw[fill=black] (2*0.83548784338+6, 1.5*0.549508929) circle(0.025cm);
      \node (lol) at (2*0.83548784338+6+0.075, 1.5*0.549508929+0.3) {$x_\omega$};
    \end{tikzpicture}
    \caption{Geometric visualisation of the proof}
  \end{figure}
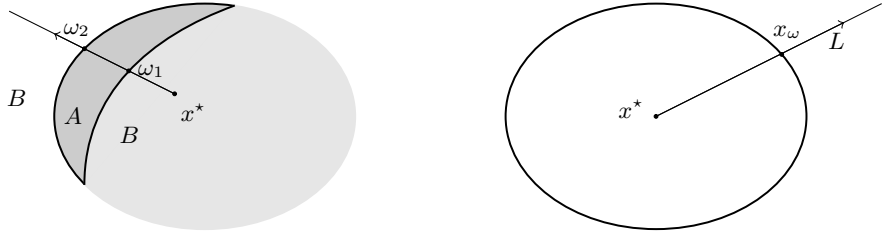

  To summarize, $\omega(x)$ is a periodic orbit, and its time average,
  an interior Nash equilibrium $x^\star$, lies in the interior of $\omega(x)$,
  $A$. We want to prove that the orbit starting from $x$ is a periodic orbit.
  To prove that, we show that $x \in \omega(x)$.
  From $x^\star$, 
  fire a semi-infinite line $L$ from $x^\star$ in the direction of $x$. 
  Because $A$ is bounded, this semi-infinite line have to cross $\omega(x)$
  in at least a point. Choose one of those and call it $x_\omega$. We claim that
  $\kl(x^\star \| - )$ is equal to $\kl(x^\star \| x)$ on $L$ only
  at $x_\omega$. Indeed, $\kl(x^\star \| -)$ is a strict convex
  function with minimum at $x^\star$, so it is stricly growing as one
  advances along the straight line $L$. As a consequence, $x_\omega$ is the 
  unique intersection point between $L$ and $\kl^{-1}(x^\star \| - )
  [\kl(x^\star \| x)]$. 
  Accordingly, $x = x_\omega \in \omega(x)$.
\qed\end{proof}

\begin{proposition}
  Let be a one-player zero-sum game with $n = 3$ actions. 
  The following statements are equivalent:
  \begin{itemize}
    \setlength\itemsep{0pt}
    \item[(i)] there exists an interior Nash equilibrium
    \item[(ii)] there exists an interior cycle orbit 
    \item[(iii)] any orbit containing an interior point is an interior cycle.
  \end{itemize}
\end{proposition}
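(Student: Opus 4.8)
The plan is to prove the chain of implications $(iii) \Rightarrow (ii) \Rightarrow (i) \Rightarrow (iii)$, since each individual arrow is already essentially available from the results assembled above. For $(iii) \Rightarrow (ii)$ I would simply observe that interior points exist (the barycenter $\frac13(1,1,1)$ is one), so $(iii)$ immediately produces an interior cycle; the only degenerate case to dispatch is when that cycle is a single rest point, but a rest point in the interior of $\mathcal{X}$ is an interior Nash equilibrium, so $(ii)$ still follows (and in fact so does $(i)$ trivially in that sub-case). For $(ii) \Rightarrow (i)$ I would use the standard time-average argument already invoked in the proof of Theorem~\ref{periodic}: the time average of the strategy along any orbit is an equilibrium by the no-regret property of replicator, and along a cycle this time average lies in the convex hull of the cycle, which is contained in $\interior\,\mathcal{X}$ since the cycle passes through an interior point; hence there is an interior Nash.

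For $(i) \Rightarrow (iii)$ I would invoke Theorem~\ref{periodic} directly: assuming an interior Nash exists, every interior point lies on a periodic orbit. The one thing to check carefully is the wording ``any orbit containing an interior point'' — I need that if an orbit passes through even one interior point then the \emph{whole} orbit is an interior cycle. This follows because the interior of $\mathcal{X}$ is forward- and backward-invariant under replicator dynamics (the faces $x_\alpha = 0$ are invariant, so no trajectory enters or leaves the interior), so an orbit through an interior point stays interior for all time; combined with Theorem~\ref{periodic} it is then a periodic orbit lying entirely in $\interior\,\mathcal{X}$, i.e. an interior cycle.

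The main obstacle — though it is more a matter of bookkeeping than of genuine difficulty — is handling the boundary-versus-interior rest point distinction cleanly, and making sure the ``interior cycle'' terminology is used consistently: a cycle that is a single point must be counted as interior iff that point is in $\interior\,\mathcal{X}$, and one must confirm the barycenter argument really forces the time average into the open simplex rather than merely its closure. Once invariance of the interior and the time-average lemma are in hand, the proof is a short assembly of Theorem~\ref{periodic} and the no-regret property, with no new estimates required.

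\begin{proof}
We show $(i) \Rightarrow (iii) \Rightarrow (ii) \Rightarrow (i)$.

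$(i) \Rightarrow (iii)$: Suppose there is an interior Nash equilibrium. Let $\gamma$ be an orbit containing an interior point $x$. Since each face $\{x_\alpha = 0\}$ is invariant under (\ref{rd}), the set $\interior\,\mathcal X$ is invariant in both time directions, so $\gamma \subseteq \interior\,\mathcal X$. By Theorem~\ref{periodic}, $x$ belongs to a periodic orbit; hence $\gamma$ is a periodic orbit contained in $\interior\,\mathcal X$, i.e. an interior cycle.

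$(iii) \Rightarrow (ii)$: The barycenter $b = \frac1n(1 \ldots 1)$ is an interior point, so the orbit through $b$ is, by $(iii)$, an interior cycle; this is an interior cycle orbit. (If this orbit is reduced to a single point, that point is an interior rest point, hence an interior Nash, so $(i)$ — and a fortiori $(ii)$ via any other construction — also holds; in any case $(ii)$ is witnessed.)

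$(ii) \Rightarrow (i)$: Let $\gamma$ be an interior cycle orbit, passing through an interior point. The time average of the strategy along a periodic orbit of replicator dynamics is a Nash equilibrium, by the no-regret property of replicator \cite{mertikopoulos2017cycles}. This time average lies in the convex hull of $\gamma$; since $\gamma$ meets $\interior\,\mathcal X$ and, being a cycle, is not contained in a single face, its convex hull lies in $\interior\,\mathcal X$. Hence there is an interior Nash equilibrium.
\qed\end{proof}
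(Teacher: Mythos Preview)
The paper states this proposition without proof, as an immediate corollary of the surrounding results: $(i)\Rightarrow(iii)$ is Theorem~\ref{periodic}, $(iii)\Rightarrow(ii)$ is trivial, and $(ii)\Rightarrow(i)$ is the contrapositive of Theorem~\ref{collapses} (no interior Nash forces every interior orbit to the boundary, so there can be no interior cycle). Your assembly is correct and matches this for the first two implications; for $(ii)\Rightarrow(i)$ you instead reuse the time-average argument already deployed inside the proof of Theorem~\ref{periodic}, which is an equally valid route.

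One phrasing issue in your $(ii)\Rightarrow(i)$ step: the clause ``since $\gamma$ meets $\interior\,\mathcal X$ and, being a cycle, is not contained in a single face, its convex hull lies in $\interior\,\mathcal X$'' is not a valid implication as written --- a set can meet the interior, avoid every single face, and still have convex hull touching the boundary. The correct justification is the one you already set up in $(i)\Rightarrow(iii)$: by invariance of the open simplex the cycle $\gamma$ lies \emph{entirely} in $\interior\,\mathcal X$, and since the interior of a convex set is convex, $\mathrm{conv}(\gamma)\subset\interior\,\mathcal X$. With that one-line fix the argument is complete.
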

 

\begin{remark}
  This proof relies on the Kullback-Leibler divergence. That is, in a Poincar\'e
  recurrent system, we used an argument specific to game theory to show 
  that all interior orbits are periodic. Thinking of what Poincar\'e recurrent
  means, one may hope to get rid of the game theoretic proof and give a topological
  proof. The motivation is clear; for any open set, almost every orbit goes back
  arbitrarily close to its initial condition, and in addition, infinitely often.
  Therefore, we get what looks like a dense set of periodic orbits.

  That is, if a point is not a rest point, because we are in dimension two, 
  its orbit is infinitely-closely trapped between periodic orbits. There, we 
  claim that there is no hope to conclude that this orbit must be periodic with
  topological arguments only. Look at the counter-example below.

  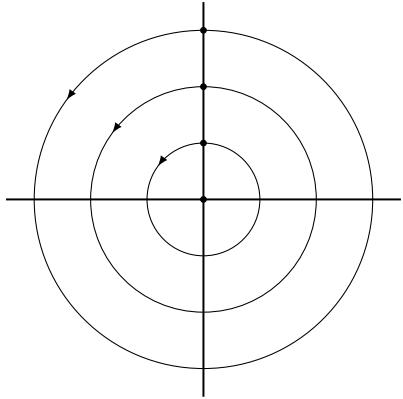
\begin{figure}[H]
    \centering
    \begin{tikzpicture}[scale=.75]
      \draw[line width = .75] (-3.5, 0) -- (3.5, 0);
      \draw[line width = .75] (0, -3.5) -- (0, 3.5);
      \draw[fill=black] (0, 0) circle(0.05cm);
      \draw[fill=black] (0, 1) circle(0.05cm);
      \draw[fill=black] (0, 2) circle(0.05cm);
      \draw[fill=black] (0, 3) circle(0.05cm);
      \draw[>=latex, 
            decoration={markings, mark=at position .4 with {\arrow{>}}},
            postaction={decorate}]
      (0, 0) circle(1);
      \draw[>=latex, 
            decoration={markings, mark=at position .4 with {\arrow{>}}},
            postaction={decorate}]
      (0, 0) circle(2);
      \draw[>=latex, 
            decoration={markings, mark=at position .4 with {\arrow{>}}},
            postaction={decorate}]
      (0, 0) circle(3);
    \end{tikzpicture}
    \caption{On the complex plane, consider the ODE $\dot z = \textbf i \; z$.
    The corresponding flow is $\phi(z, t) = z\cdot e^{\textbf it}$. 
    Hence, every orbits
    are circles, excepted the single rest point at the origin. Add the 
    \emph{velocity regularizer} 
    $\delta : z \mapsto \min \{1, \dist(z, \textbf i \N)\}$. The ODE becomes
    $\dot z = \textbf i \; \delta(z) z$. Then, almost all
    orbits are still circles, so there is a dense set of periodic orbits and
    the system is Poincar\'e recurrent. 
    Yet, if a point $z$ has integer module, it is arbitrarly close 
    to a periodic orbit, and its limit set is the rest point 
    $\textbf i |z|$.}
  \end{figure}
\end{remark}

\section*{Acknowledgments}

Georgios Piliouras acknowledges MOE AcRF Tier 2 Grant 2016-T2-1-170, grant PIE-SGP-AI-2018-01 and NRF 2018 Fellowship NRF-NRFF2018-07. This work was partially done while Victor Boone was a visitor at SUTD under the supervision of Georgios Piliouras. Victor Boone thanks Bruno Gaujal and Panayotis Mertikopoulos for helping to arrange the visit and for their overall guidance and mentorship.

\bibliography{ms}
\bibliographystyle{wine-class/splncs04}

\newpage

\appendix

\section{Proof of Theorem~\ref{volcons-general}}

The proof makes  use of the explicit formula of the 
diffeomorphism $\f^{-1}$ from cumulative payoffs space to strategy space. 
For each player $i$ and vector $y_i = (y_{i,1} \ldots y_{i,n_i-1})$ of 
cumulative payoffs, we recall from 
\cite{piliouras2014optimization,Hofbauer98} 
that the corresponding strategy 
$x_i = \f^{-1}(y_i)$ is explicitly
\begin{equation}
   (x_{i,1} \ldots x_{i, n_i}) = (\f^{-1}_1(y_i) \ldots \f^{-1}_{n_i}(y_i))
   = \left(\frac 1{S(y_i)}, 
   \frac {e^{y_{i, 1}}}{S(y_i)},
   \ldots,
   \frac {e^{y_{i, n_i-1}}}{S(y_i)}\right)
\end{equation}
where $S(y_i) = 1 + \sum_{\alpha=1}^{n_i-1} e^{y_{i, \alpha}}$.

\begin{lemma}
  \label{div-formula}
  Let $\phi$ be the flow of replicator dynamics (\ref{rd}) with $N$ agents.
  Let $\psi(y, -) = \f(\phi(\f^{-1}(y), -))$ be the diffeomorphic flow onto 
  cumulative payoffs space. Then, its divergence is 
  \begin{equation}
    \divergence \; \frac{\partial \psi}{\partial t}(y) = \sum_{i=1}^N
    \sum_{\alpha=1}^{n_i}
    \sum_{\beta = 1}^{n_i} x_{i,\alpha} x_{i, \beta} \left(
    \frac{\partial u_{i,\alpha}}{\partial x_{i,\alpha}} - 
    \frac{\partial u_{i,\alpha}}{\partial x_{i, \beta}}
    \right) 
  \end{equation}
\end{lemma}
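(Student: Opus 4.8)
The plan is to compute the divergence of $\psi$'s vector field directly in cumulative payoffs space, by the chain rule through $\f^{-1}$. Since $\psi$ conjugates the replicator flow by $\f$, its vector field has components
\begin{equation}
\xi_{i,\alpha}(y) \;=\; \dot y_{i,\alpha} \;=\; u_{i,\alpha+1}\bigl(\f^{-1}(y)\bigr) - u_{i,1}\bigl(\f^{-1}(y)\bigr),\qquad \alpha=1,\dots,n_i-1,
\end{equation}
so that $\divergence\,\frac{\partial\psi}{\partial t}(y)=\sum_{i}\sum_{\alpha=1}^{n_i-1}\partial\xi_{i,\alpha}/\partial y_{i,\alpha}$. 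The first step is to record the Jacobian of $\f^{-1}$: because $\f^{-1}$ acts block-wise on the players, $y_{i,\alpha}$ influences only the block $x_i$, and differentiating $x_{i,1}=1/S(y_i)$ and $x_{i,\beta}=e^{y_{i,\beta-1}}/S(y_i)$ yields the compact identity
\begin{equation}
\frac{\partial x_{i,\beta}}{\partial y_{i,\alpha}} \;=\; x_{i,\beta}\bigl(\delta_{\beta,\alpha+1}-x_{i,\alpha+1}\bigr),
\end{equation}
with $\delta$ the Kronecker symbol.

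Next I would feed this into the chain rule to obtain
\begin{equation}
\frac{\partial\xi_{i,\alpha}}{\partial y_{i,\alpha}}=\sum_{\beta=1}^{n_i}\left(\frac{\partial u_{i,\alpha+1}}{\partial x_{i,\beta}}-\frac{\partial u_{i,1}}{\partial x_{i,\beta}}\right)x_{i,\beta}\bigl(\delta_{\beta,\alpha+1}-x_{i,\alpha+1}\bigr),
\end{equation}
and then sum over $\alpha$ within a fixed player $i$, reindexing by $a=\alpha+1\in\{2,\dots,n_i\}$ and splitting each summand into its $\delta_{\beta,a}$-part and its $-x_{i,a}$-part. Writing $R_\gamma:=\sum_{\beta}x_{i,\beta}\,\partial u_{i,\gamma}/\partial x_{i,\beta}$, $P:=\sum_{\alpha}x_{i,\alpha}\,\partial u_{i,\alpha}/\partial x_{i,\alpha}$ and $Q:=\sum_{\alpha,\beta}x_{i,\alpha}x_{i,\beta}\,\partial u_{i,\alpha}/\partial x_{i,\beta}$, and repeatedly using $\sum_\beta x_{i,\beta}=1$ (both to kill the $\sum_\beta\delta_{\beta,a}$ and to re-insert the missing $a=1$ terms), the $\delta$-part collapses to $P-R_1$ and the $-x_{i,a}$-part collapses to $R_1-Q$. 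Adding them, the $R_1$ contributions — the only place where the privileged reference action $1$ survives — cancel, leaving $P-Q$, which equals $\sum_{\alpha,\beta}x_{i,\alpha}x_{i,\beta}\bigl(\partial u_{i,\alpha}/\partial x_{i,\alpha}-\partial u_{i,\alpha}/\partial x_{i,\beta}\bigr)$ once more by $\sum_\beta x_{i,\beta}=1$. Summing over $i$ gives the claimed formula.

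The only genuinely delicate point is this final cancellation of the $R_1$ terms: action $1$ plays a special role both in the softmax form of $\f^{-1}$ and in the vector field $\xi$, and one must check that this asymmetry washes out — which it must, since the target expression is manifestly symmetric in the roles of the actions and independent of the choice of reference action. As a consistency check, if no player carries a self-loop then every $\partial u_{i,\alpha}/\partial x_{i,\beta}$ vanishes and the divergence is identically $0$, recovering the classical volume conservation of loopless zero-sum polymatrix games; and substituting an antisymmetric diagonal matrix $A^{i,i}$, so that $\partial u_{i,\alpha}/\partial x_{i,\beta}=A^{i,i}_{\alpha,\beta}$, collapses the per-player summand to $-\,x_i^{\top}A^{i,i}x_i=0$ by antisymmetry of the quadratic form — which is exactly the computation behind Theorem~\ref{volcons-general} (and matches the displayed expression~(\ref{monomial}) used in Theorem~\ref{vol-zero-sum}).
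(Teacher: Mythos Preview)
Your proof is correct and follows essentially the same route as the paper: compute the Jacobian of the softmax $\f^{-1}$, apply the chain rule to $\dot y_{i,\alpha}=u_{i,\alpha+1}-u_{i,1}$, extend the sum over $\alpha$ to include the reference action (the added term being zero), and cancel the contributions coming from $u_{i,1}$. The only differences are cosmetic: you package the Jacobian in the single identity $\partial x_{i,\beta}/\partial y_{i,\alpha}=x_{i,\beta}(\delta_{\beta,\alpha+1}-x_{i,\alpha+1})$ and organise the cancellation via the shorthands $P,Q,R_1$, whereas the paper computes the three Jacobian cases separately and removes the $u_{i,1}$ terms by the antisymmetry of $\sum_{\alpha,\beta}x_{i,\alpha}x_{i,\beta}\bigl(\partial u_{i,1}/\partial x_{i,\beta}-\partial u_{i,1}/\partial x_{i,\alpha}\bigr)$ --- which is exactly your $R_1-R_1$ cancellation written differently.
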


\begin{proof} We want to compute the divergence of the vector field in
cumulative costs space, that is, 
$\sum_{i=1}^N \sum_{\alpha=1}^{n_i-1} \frac{\partial}{\partial y_{i,\alpha}}
\left[u_{i,\alpha+1}(\f^{-1}(y)) - u_{i, 1}(\f^{-1}(y))\right]$. Using the 
chain rule, this precisely is
\begin{equation}
  \label{div-term-scalar}
  \frac{\partial}{\partial y_{i,\alpha}} \left[u_{i,\beta}
  (\underbrace{\f^{-1}(y)}_{x(t)})
  \right] = \left \langle \nabla_{x(t)} u_{i, \beta} \; ,\;
  \frac \partial{\partial y_{i,\alpha}} \f^{-1}(y) \right \rangle
\end{equation}
Recall that the divergence is given by the sum over $i$ and $\alpha$ of terms
expressed in equation (\ref{div-term-scalar}). If there is no loop, we do not 
need
to compute this scalar product since it is over vectors of disjoint supports; 
but this is not true in general. So, we start by computing the right term of 
(\ref{div-term-scalar}). 
In the three following equations,
we write $S$ for $1 + \sum_{\gamma=1}^{n_i-1} e^{y_{i,\gamma}}$. We check that
\begin{equation}
  \label{f1}
  \frac \partial {\partial y_{i,\alpha}} \bigg[\f^{-1}_1(y)\bigg]
  = \frac \partial {\partial y_{i,\alpha}} \left ( \frac 1{1 + \sum_\gamma
  e^{y_{i,\gamma}}} \right) = - \frac{e^{y_{i,\alpha}}}{S^2} = 
  - x_{i,\alpha+1} x_{i, 1}
\end{equation}
\begin{equation}
  \label{fl}
\text{for } \gamma \ne \alpha,\quad \frac \partial {\partial y_{i,\alpha}} 
\bigg [\f^{-1}_{\gamma+1}(y)\bigg]
= \frac \partial {\partial y_{i,\alpha}} \left(\frac{e^{y_{i,\gamma}}}
{1 + \sum_\gamma e^{y_{i,\gamma}}}
\right) =
- \frac{e^{y_{i,\alpha}} e^{y_{i,\gamma}}}{S^2} = -x_{i, \alpha+1} 
x_{i, \gamma+1}
\end{equation}
\begin{equation}
  \label{fj}
\text{for } \gamma = \alpha,\quad \frac \partial {\partial y_{i,\alpha}} 
\bigg[\f^{-1}_{\gamma+1}(y)\bigg]
= \frac{e^{y_{i,\alpha}}} S - \left(\frac{e^{y_{i,\alpha}}} S\right)^2
= x_{i, \alpha+1} (1 - x_{i, \alpha+1})
\end{equation}
Injecting (\ref{f1}) (\ref{fl}) and (\ref{fj}) into (\ref{div-term-scalar}), 
we get to
\begin{align*}
  \frac{\partial}{\partial y_{i,\alpha}} \bigg[u_{i,\beta}\left(\f^{-1}(y)\right)
  \bigg] & = \sum_{\gamma=1}^{n_i} \frac{\partial u_{i,\beta}}{\partial 
  x_{i,\gamma}}
  \frac {\partial \f^{-1}_\gamma} {\partial y_{i,\alpha}} (y) \\
  &= \frac{\partial u_{i,\beta}}{\partial x_{i,\alpha+1}} 
  \bigg [ x_{i, \alpha+1} (1 - x_{i, \alpha+1}) \bigg ]
  - \sum_{\gamma \ne \alpha+1} \left[x_{i, \alpha+1} x_{i, \gamma}\frac{\partial 
  u_{i,\beta}}{\partial x_{i,\gamma}}\right]
\end{align*}
Recall that $x_i \in \mathcal X_i$, thus $1 - x_{i, \alpha+1} = 
\sum_{\gamma \ne \alpha+1} x_{i,\alpha}$. Using it in the above equality, we 
obtain
\begin{equation}
  \label{formula-k}
  \frac{\partial}{\partial y_{i,\alpha}} \bigg[u_{i,\beta}(\f^{-1}(y)) \bigg]
  = \sum_{\gamma \ne \alpha+1} x_{i, \alpha+1} x_{i,\gamma} \left(
  \frac{\partial u_{i,\beta}}{\partial x_{i,\alpha+1}} - 
  \frac{\partial u_{i,\beta}}{\partial x_{i,\gamma}}\right)
\end{equation}
Observe that the term for $\gamma = \alpha+1$ would be null. Thus, we can add it.
Then, doing (\ref{formula-k})$[\beta:=\alpha+1] - (\ref{formula-k})[\beta:=1]$, 
we get
\begin{equation}
  \frac{\partial \dot y_{i,\alpha}}{\partial y_{i,\alpha}}
  = \sum_{\beta=1}^{n_i} x_{i, \alpha+1} x_{i, \beta}
  \left[\frac{\partial u_{i,\alpha+1}}{\partial x_{i,\alpha+1}} - 
  \frac{\partial u_{i,1}}{\partial x_{i,\alpha+1}} +
  \frac{\partial u_{i,1}}{\partial x_{i,\beta}} -
  \frac{\partial u_{i,\alpha+1}}{\partial x_{i,\beta}}
   \right]
\end{equation}
Sum over $i \in \{1 \ldots N\}$ and $\alpha \in \{1 \ldots n_i - 1\}$. As 
the term for $\alpha = 0$ is null, we can do the change of variable 
$\alpha' := \alpha + 1$ and sum it from 1 to $n_i$. Therefore, the divergence is
\begin{equation}
\divergence \; \frac{\partial \psi}{\partial t}(y) =
\sum_{i=1}^N\sum_{\alpha=1}^{n_i}\sum_{\beta=1}^{n_i} 
  x_{i, \alpha} x_{i, \beta}
  \left[\frac{\partial u_{i,\alpha}}{\partial x_{i,\alpha}} - 
  \frac{\partial u_{i,\alpha}}{\partial x_{i,\beta}} +
  \frac{\partial u_{i,1}}{\partial x_{i,\beta}} -
  \frac{\partial u_{i,1}}{\partial x_{i,\alpha}} 
   \right]
\end{equation}
But $\sum_{\alpha=1}^{n_i}\sum_{\beta=1}^{n_i} x_{i, \alpha} x_{i, k\beta} 
\left[\frac{\partial u_{i,1}}{\partial x_{i,\beta}} - \frac{\partial u_{i,1}}
{\partial x_{i,\alpha}} \right]$ is antisymmetric, so cancels out. We are left 
with
\begin{equation}
  \label{div}
  \divergence \; \frac{\partial \psi}{\partial t}(y) =
  \sum_{i=1}^N\sum_{\alpha=1}^{n_i}\sum_{\beta=1}^{n_i} 
  x_{i, \alpha} x_{i, \beta}
  \left[\frac{\partial u_{i,\alpha}}{\partial x_{i,\alpha}} - 
  \frac{\partial u_{i,\alpha}}{\partial x_{i,\beta}}
  \right]
\end{equation}
\qed\end{proof}

Now, the proof of Theorem~\ref{volcons-general} is straightforward.

\begin{proof}[Theorem~\ref{volcons-general}]
  We assume that all $A^{i,i}$ are antisymmetric matrices.
  The partial derivative is 
  $\frac{\partial u_{i, \alpha}}{\partial x_{i, \gamma}} =
  (\frac{\partial A^{i, i} x_i}{\partial x_{i, \gamma}} )_\alpha
  = \frac\partial{\partial x_{i, \gamma}}\big[\sum_\beta A^{i,i}_{\alpha, 
  \beta} x_\beta\big]= (A^{i,i}_{\alpha, \gamma})_\alpha$. 
  Rewrite (\ref{div}).

\begin{equation}
  \label{eq-volcons-general}
  \divergence \; \frac{\partial \psi}{\partial t}(y) =
  \sum_{i=1}^N\sum_{\alpha=1}^{n_i}\sum_{\beta=1}^{n_i} \bigg[
  x_{i, \alpha} x_{i, \beta} (A^{i,i}_{\alpha,\alpha} - A^{i,i}_{\alpha,\beta})
  \bigg]
\end{equation}

But $A^{i,i}_{\alpha, \alpha} = 0$ for all $\alpha$, and 
for all $i$, $(x_{i, \alpha} x_{i, \beta} A^{i,i}_{\alpha, \beta})_{\alpha, 
\beta}$ is an antisymmetric 
matrix, and hence the sum of all its coefficient is null. Accordingly,
the right term of (\ref{eq-volcons-general}) is zero, and applying Liouville's
formula (\ref{liouville}), $\psi$ conserves volume.
\qed\end{proof}

\begin{remark}
  Note that nowhere in the proof did we actually use the assumption that 
  the $N$-player game without the self-edges has to be a graphical polymatrix
  game. The exact same proof would hold for any $N$-player normal form game
  with antisymmetric self-edges.
\end{remark}

\section{Proof of Lemma~\ref{klinv}}
\label{lem-klinv}

Computing the time derivative, we get that
  \begin{equation}
    \label{dt-kull}
    \frac d{dt} \left[\sum_{i = 1}^N \kl(x_i^*\| x_i)\right]
    = \sum_{i=1}^N u_i(x) - 
      \sum_{i=1}^N \sum_{\alpha_i = 1}^{n_i} x_{i,\alpha_i}^* u_{i,\alpha_i}(x)
  \end{equation}
  Now, recall that for zero-sum games, $\sum u_i(-) = 0$, since for any $x$,
  \begin{equation}
  \sum_{i=1}^N u_i(x) = \sum_{i=1}^N \sum_{j=1}^N \sum_{\alpha_i=1}^{n_i}
  \sum_{\alpha_j=1}^{n_j} x_{i, \alpha_i} x_{j, \alpha_j} A^{i,j}_{\alpha_i, 
  \alpha_j} 
  \end{equation}
  that is a sum of all the coefficients of the antisymmetric matrix
  $(A^{i,j}_{\alpha_i,\alpha_j} x_{i, \alpha_i} x_{j, \alpha_j})$
  \footnote{$A^{i, j}_{\alpha_i, \alpha_j} = - A^{j, i}_{\alpha_j, \alpha_i}$}
  hence is zero. Therefore, (\ref{dt-kull}) simplifies into
  \begin{align}
    \frac d{dt} \left[\sum_{i = 1}^N \kl(x_i^*\| x_i)\right]
    & = - \sum_{i=1}^N \sum_{\alpha_i = 1}^{n_i} x_{i,\alpha_i}^* 
    u_{i,\alpha_i}(x) \\
    & = - \sum_{i=1}^N \sum_{j=1}^N 
          \sum_{\alpha_i=1}^{n_i} \sum_{\alpha_j=1}^{n_j} 
          x^*_{i, \alpha_i} A^{i,j}_{\alpha_i, \alpha_j} x_{j, \alpha_j}  \\
    & = \sum_{j=1}^N \sum_{\alpha_j=1}^{n_j} x_{j, \alpha_j} \sum_{i=1}^N 
    \sum_{\alpha_i=1}^{n_i} A^{j ,i}_{\alpha_j, \alpha_i} x^*_{i, \alpha_i}
  \end{align}
  Then,
  \begin{align}
    \frac d{dt} \left[\sum_{i = 1}^N \kl(x_i^*\| x_i)\right]
    & = \label{jhzo} \sum_{i=1}^N \sum_{\alpha_i = 1}^{n_i} 
    x_{i,\alpha_i} u_{i}(x^*) \\
    & = \label{jhzo2}\sum_{i=1}^N u_{i, \alpha _i}(x^*) = 0
  \end{align}
  where we used one fundamental property of full-support Nash equilibria, that is 
  here $u_{i,\alpha_i}(x^*) = u_i(x^*)$, to go from (\ref{jhzo}) to 
  (\ref{jhzo2}).

\section{Proof of Lemma~\ref{bounded-away}}
\label{lem-bounded-away}
Let $x$ be an interior point that is not an equilibrium. Then, 
according to the previous lemma, for any $y \in \gamma$, 
$C = \sum_i \kl(x^*_i \| x_i) = \sum_i(x^*_i \| y_i) > 0$. But 
$\kl(x^*_i \| -) \ge 0$ for any player $i$, so $\kl(x^*_i \| y)$ is 
a point of the segment $[0, C]$ independently of the player $i$ and the 
point $y \in \gamma$. Expanding $\kl(x^*_i \| y)$, this means 
\begin{equation}
  x^*_{i, \alpha} \log y_{i, \alpha} \ge
  - C - h_2(x^*_i) - \sum_{\beta \ne \alpha} x^*_{i, \beta} \log y_{i, \beta}
\end{equation}
for any player $i$ and action $\alpha \in \mathcal A_i$, and where $h_2(x^*_i)$
is the Shannon entropy of $x^*_i$. Writing $C'_{i, \alpha} =
\sum_{\beta \ne \alpha} x^*_{i, \beta} \log y_{i, \beta}$, we check that 
$C'_{i, \alpha}$ is non-positive. Therefore
$x^*_{i, \alpha} \log y_{i, \alpha} \ge - C - h_2(x_i^*)$, that is
\begin{equation}
  y_{i, \alpha} \ge \exp \left(-\frac{C + h_2(x^*_i)}{x_{i, \alpha}^*}\right)
\end{equation}
Define $\delta = \min_{i \in \mathcal N, \alpha \in \mathcal A_i}
\exp \left(-\frac{C + h_2(x^*_i)}{x_{i, \alpha}^*}\right)$. Then, at any point 
$y \in \gamma$, for any player $i$ and action $\alpha \in \mathcal A_i$,
$y_{i, \alpha} \ge \delta > 0$. Accordingly, $\gamma$ is bounded away from the
boundary.

\section{Proof of Theorem~\ref{collapses}}
\label{thm-collapses}

  This is a typical Lyapunov function argument. Let $\mathcal X'$ denotes 
  the set of strategies $y$ of support strictly containing $\supp \; x^*$.
  In particular, this set contains the interior of $\mathcal X$. 
  Fix $x \in \mathcal X'$. 
  By Proposition~\ref{partial-support}, $\frac d{dt} \kl(x^* \| \phi(x, 0)) < 0$. 
  We claim that $\omega(x) \cap \mathcal X' = \varnothing$.

  We prove it by contradiction. Assume that $\omega(x) \cap \mathcal X'
  \ne \varnothing$. Accordingly, there exists $y \in \omega(x) \cap \mathcal X'$,
  and more precisely, there exists $t_n \uparrow + \infty$ such that 
  $\lim \phi(x, t_n) = y$.
  It is clear that the orbit $\gamma$ of $x$ is a subset of $\mathcal X'$.
  Therefore, $t \mapsto \kl(x^* \| \phi(x, t))$ is a decreasing function.
  It is thus immediate that for all $n$, 
  $\kl(x^* \| \phi(x, t_n)) \ge \kl(x^* \| y)$, and because $(t_n)$ is 
  a growing sequence diverging to infinity,
  \begin{equation}
    \label{jdklqs}
    \forall t \ge 0 \quad\quad \kl(x^* \| \phi(x, t)) \ge \kl(x^* \| y)
  \end{equation}
  Now, recall that $y \in \mathcal X'$, so $\frac d{dt} \kl(x^* \|
  \phi(y, 0)) < 0$. So, there exists $t' > 0$ such that 
  $\kl(x^* \| y) > \kl(x^* \| \phi(y, t'))$. In addition, using the continuity
  of $\phi$, $\lim \phi(x, t' + t_n) = \lim \phi(\phi(x, t_n), t')
  = \phi(\lim \phi (x, t_n), t') = \phi(y, t')$. But $\kl$ is continuous as
  well, so
  \begin{equation}
    \lim \kl(x^* \| \phi(x, t'+t_n)) = \kl(x^* \| \phi(y, t'))
  \end{equation}
  Yet, $\kl(x^* \| y) > \kl(x^* \| \phi(y, t'))$ so for $n$ large enough,
  \begin{equation}
    \kl(x^* \| \phi(y, t')) \le \kl(x^* \| \phi(x, t'+t_n)) < \kl(x^* \| y)
  \end{equation}
  This contradicts (\ref{jdklqs}).

\end{document}